\documentclass[11pt]{amsart}
\usepackage{fullpage}
\usepackage{mydefs}
\usepackage{amssymb}
\usepackage{graphicx}

\begin{document}
\title[Capacity Maximization in Wireless Networks]{On a Game Theoretic Approach to Capacity Maximization in
Wireless Networks}

\author[E. \'{A}sgeirsson]{Eyj\'{o}lfur Ingi \'{A}sgeirsson}
\address[E. \'{A}sgeirsson]{School of Science and Engineering\\
Reykjavik University\\
Reykjavik 101, Iceland}
\email{eyjo@hr.is}

\author[P. Mitra]{Pradipta Mitra}
\address[P. Mitra]{School of Computer Science\\
Reykjavik University\\
Reykjavik 101, Iceland}
\email{ppmitra@gmail.com}

\begin{abstract}
We consider the capacity problem (or, the single slot scheduling problem) in wireless networks. 
Our goal is to maximize the number of successful
connections in arbitrary wireless networks where a transmission is
successful only if the signal-to-interference-plus-noise ratio at the
receiver is greater than some threshold. We study a game theoretic
approach towards capacity maximization introduced
by Andrews and Dinitz (INFOCOM 2009) and Dinitz (INFOCOM 2010). 
We
prove vastly improved bounds for the game theoretic algorithm. 
In doing so, we achieve the first \emph{distributed} constant factor approximation
algorithm for capacity maximization for the uniform power assignment. 
When compared to the optimum where links may use an arbitrary power
assignment, we prove a $O(\log \Delta)$ approximation, where $\Delta$ is the ratio 
between the largest and the smallest
link in the network. 
This is an exponential improvement of the approximation factor compared
to existing results for distributed algorithms. All our results work for links located in \emph{any} metric space.
In addition, we provide simulation studies
clarifying the picture on distributed algorithms for capacity
maximization.
\end{abstract}

\maketitle

\section{Introduction}
The question of maximizing the \emph{capacity} of
a wireless communication network is a well-studied problem. The setting in which we study
this is the following: there are a set of \emph{links}, where each link
represents a potential transmission from a sender to a receiver. The \emph{capacity}
then is the maximum number of links that can successfully transmit at once, ie, in the same
time \emph{slot}.

A central question in this context is how to model interference between various attempted
transmissions in the network. The models used in the literature
can essentially be divided into two types. First, there is the \emph{protocol model} where
interference is modeled by a \emph{interference graph}, and a
transmission is successful if and only if none of the neighbors
of the transmission in this graph also choose to transmit at the same time. 
Thus capacity maximization becomes equivalent to the maximum independent set problem.
With appropriate restrictions placed on the structure of the graph, a number of solutions
have been proposed to this problem and its variants (eg, \cite{Schneider:2008:LDM:1400751.1400758,Nieberg:2008:ASW:1383369.1383380,Erlebach:2001:PAS:365411.365562}).

It is well-known though, that graph-based protocols are not very good in capturing reality and this has been 
demonstrated both theoretically and
experimentally~\cite{MaheshwariJD08,Moscibroda2006Protocol}. As a result,
a lot of recent algorithmic work has focused on the so-called \emph{physical model} or the SINR model.
We will describe the precise model in Section \ref{sec:model}, but we give an outline here.
Basically, in the SINR model every communication link has a power at which it transmits
(this power may be pre-determined, say due to hardware limitations, or the sender may be able to 
choose its own transmission power).
The power received from such a transmission across the space fades away from
its source in a physically reasonable way.
Given this,
every attempted communication interferes with every other attempted communication,
but in differing quantities depending on the distances between the links.
 Though this model is still an abstraction of reality, it is believed
to model wireless networks better than the \emph{protocol model}, and we will focus solely on
this model in this work.

The capacity of random networks in the SINR model was studied in
the highly-cited work by
Gupta and Kumar \cite{kumar00}, and a large number of papers have pursued the same theme. 
Algorithmic results on worst case instances, on the other hand, have only recently garnered attention,
starting with the important work of Moscibroda and Wattenhofer \cite{MoWa06}.
Since then, a large body of work has been produced for this problem (see 
\cite{gouss2007,Moscibroda2006Protocol,BE06a,HW09,us:esa09full,FKV09,KV10} and many references therein).

Even within the framework of the SINR model, a number of variations exist.
First, there is the question of the space in which the communication links
lie. Assuming that the space is a 2-dimensional Euclidean plane is 
natural, yet clearly a simplification.
Obstructions, air density, geometry of the enclosing space,
antenna directionality, and terrains complicate the picture, bringing the Euclidean assumption
 into question (indeed, the space may not even be a metric). 
 Most results cited above focus on the Euclidean plane, or a generalization thereof
 known as \emph{fading metrics} \cite{us:esa09full}.
 Naturally, we would like to provide results with as much generality as possible,
 and as we will see, our results are applicable to completely general metrics.
A related issue is that  of the \emph{path loss exponent} $\alpha$,
which defines how the signal fades away from its source.
Most approximation results have used the 
assumption that $\alpha > 2$, which is used in a crucial fashion along with the Euclidean plane assumption.
Though this assumption has some justification, it
is known that $\alpha$ can actually be equal to or even smaller
than 2 in real networks (see \cite{SWTFA95}). Our results will simply assume that $\alpha > 0$.

Also, as hinted before, the transmission powers used by links 
may be either \emph{fixed} or \emph{arbitrary}.
In the fixed version, each link must use a predetermined power, whereas in the 
arbitrary case, the algorithm may select different powers to increase capacity. The most
important fixed power assignment for our purposes is the uniform power assignment,
where each link must use the same power. In fact, our algorithm will use
uniform power. Given this, we can compare the capacity achieved by our algorithm to either
the maximum capacity achievable using uniform power, or the 
with the maximum capacity achievable when links can choose a arbitrary power assignment to increase capacity. 
We shall do both.

Finally, there is the crucial issue of centralized vs. distributed algorithms. Results for approximation algorithms
for the SINR model have been almost exclusively focused on centralized algorithms, including
most of the work cited before. While centralized
algorithms are interesting in their own right, and also may be useful in practice, clearly 
in a variety of real scenarios, distributed algorithms would be preferable. 
We are aware of only a few works that address this important issue in a rigorous manner. First, recently
\cite{KV10} proposed a $O(\log^2 n)$ approximation algorithm for the \emph{scheduling}
problem (where one is trying to find the minimum number of slots required to successfully
transmit all communication requests). This is related, yet not directly comparable to the 
capacity maximization problem.

On the capacity maximization problem, two recent related papers by 
Andrews and Dinitz \cite{DBLP:conf/infocom/AndrewsD09}
and by Dinitz \cite{Dinitz2010} tackle the question in a distributed setting. Since
we adopt their approach and those papers provide the main point of comparison for this work, 
its is worth discussing the main ideas in them.
In \cite{DBLP:conf/infocom/AndrewsD09} the authors model the distributed
setting as a game where each link is a player and the power settings are the 
pure strategies. Then they show that at a mixed Nash equilibrium, the expected
number of transmissions is within a factor of $O(\Delta^{2 \alpha})$ of the
optimum using arbitrary power assignments (where $\Delta$ is the ratio between
the largest and smallest link in the network). However, this is a non-algorithmic
result. In \cite{Dinitz2010}, Dinitz uses the concept of a \emph{no-regret algorithm} to convert this
structural result into an algorithmic one. The author shows that if each link uses a \emph{no-regret algorithm},
then after a certain number of rounds, the network reaches a position similar to that of a mixed Nash equilibrium, and thus
acheives a $O(\Delta^{2 \alpha})$ approximation. 
As mentioned in those papers, the approach is very robust and surprisingly versatile. It can handle malicious links, gracefully handle
new links joining the network, and allows individual links to use different algorithms as long as each uses
a \emph{no-regret} algorithm (a number of which exist, all of which are fairly simple).
Yet, from an approximation stand-point, the result
is not very promising, specially when compared to $O(1)$-approximation for uniform power \cite{HW09}, and 
$O(\log \log \Delta \cdot \log n)$ for arbitrary power assignments \cite{us:esa09full} (these results use centralized
algorithms, though). 
In addition, the result works only for euclidean plane with $\alpha > 2$ (and fading metrics
with appropriately large $\alpha$).
Our goal is to improve these results and generalize them to general metrics while only requiring  $\alpha > 0$.
An appealing aspect of this proposed task is that the machinery of
\emph{no-regret algorithm}s can be used essentially as a black-box, thus improvements to the structural
result provide, after routine modifications, improved algorithms. This is what we achieve. Specifically,
we show that after convergence, the algorithm achieves $O(1)$-approximation factor compared to the optimum
using uniform power and $O(\log \Delta)$-approximation factor compared to the optimum using
arbitrary power assignments. For uniform power, we thus essentially get best possible bounds, while
for arbitrary power, we improve the bounds exponentially. Surprisingly, our proof approach is quite different
and arguably much simpler compared to that of \cite{Dinitz2010}.

\section{Communication Model and Results}
\label{sec:model}

%% We use the notation from \cite{HW09}.

We assume that there is a set $L$ of links, where
each link $v \in L$ represents a potential transmission from a sender
$s_v$ to a receiver $r_v$, each points in a metric space. 
The distance between two points $x$ and $y$ is denoted $d(x,y)$.

The distance from
$v$'s sender to $w$'s receiver is denoted $d_{vw} = d(s_v, r_w)$.
The length of link $v$ is denoted 
by $\ell_v = d(s_v, r_v)$.

The set may be associated with a \emph{power assignment}, which is an assignment of a transmission
power $P_v$ to be used by each link $v \in L$.
We assume $0 \leq P_v \leq P_{\max}$ for some 
fixed $P_{\max}$, for all $v$. 
The setting $P_v = 0$ means the sender is not transmitting. For simplicity we will assume $P_{\max} = 1$
without loss of generality.
The signal received at point
$y$ from a sender at point $x$ with power $P$  is $P/d(x, y)^\alpha$ where the constant 
$\alpha > 0$ is the
\emph{path-loss exponent}. 

We can now describe the  \emph{physical} or SINR-model of interference. In this model, a receiver $r_v$
successfully receives a message from the sender $s_v$ if and only if the
following condition holds:
\begin{equation}
 \frac{P_v/\ell_v^\alpha}{\sum_{\ell_w \in S \setminus  \{\ell_v\}}
   P_w/d_{wv}^\alpha + N} \ge \beta, 
 \label{eq:sinr}
\end{equation}
where $N$ is the environmental noise, the constant $\beta \ge 1$ denotes the minimum
SINR (signal-to-interference-noise-ratio) required for a message to be successfully received,
and $S$ is the set of concurrently scheduled links in the same \emph{slot}.

We say that $S$ is \emph{SINR-feasible} (or simply \emph{feasible}) if (\ref{eq:sinr}) is
satisfied for each link in $S$. 
Let $\Delta = \frac{l_{\max}}{l_{\min}}$ where $l_{\max}$ and $l_{\min}$ are respectively, the maximum and minimum lengths in $L$.

\begin{defn}
The affectance $a^P_w(v)$ of link $\ell_v$ caused by another link $\ell_w$,
with a given power assignment $P$,
is the interference of $\ell_w$ on $\ell_v$ relative to the power
received, or
  \[ a^P_{w}(v) 
     = \min\left\{1, c_v \frac{P_w}{P_v} \cdot \left(\frac{\ell_v}{d_{wv}}\right)^\alpha\right\},
  \] 
where $c_v = \beta/(1 - \beta N \ell_v^\alpha/P_v)$. 
\end{defn}

The definition of affectance was introduced in \cite{GHWW09} and achieved the form
we are using in \cite{KV10}.
When referring to uniform power (where $P_v = 1$ for all $v$)
we drop the superscript $P$. 
Also, let $a^P_v(v) = 0$.
Using the idea of affectance, Eqn. \ref{eq:sinr} can be rewritten as $\sum_{u \in S}a^P_u(v) \leq 1$ for all $v \in S$.

We will use the notation $OPT$ to denote the largest set that is feasible using uniform power, and $OPT_P$
to denote the largest set that is feasible using some arbitrary power assignment. Thus the maximum
capacity of the network is $|OPT|$ or $|OPT_P|$, depending the flexibility one allows on the power assignments.

As has become common in a distributed setting \cite{DBLP:conf/infocom/AndrewsD09,Dinitz2010,KV10} we assume
\begin{equation}
\beta \geq c_v \geq \frac{\beta}{2}
\label{cvlimit}
\end{equation}
which essentially means that received signal at every receiver is somewhat larger than what is needed to succeed given
the environmental noise.

A last piece of terminology we need is a \emph{$\delta$-signal} set. A \emph{$\delta$-signal} set is a set of links where the
affectance on any link is at most $1/\delta$.
A set is feasible iff it is a 1-signal set.

\subsection{A note about the model}
Our model is different from the model used in \cite{DBLP:conf/infocom/AndrewsD09,Dinitz2010} in one
small detail. In those works the signal received at $y$ from a sender at point $x$ with power $P$  is $\min\{1, P/d(x, y)^\alpha\}$ instead of $P/d(x, y)^\alpha$ in our case. Let us call the former the ``bounded" model, and call ours
the ``unbounded" model.
Both models have been used in the literature, our choice is motivated by it being somewhat more elegant, and
some of the machinery we use from previous works being developed in the unbounded model. Though the two
models are not equivalent, from a technical standpoint, 
the difference is rather minor. All results we use and prove are easily transferrable to the bounded
model by routine modification, in the form of separate handling of lengths less than $1$.

The point that needs to be clarified is how to compare the results for the two models.
The reader will notice that the bounds in \cite{DBLP:conf/infocom/AndrewsD09,Dinitz2010}
are expressed in terms of $l_{\max}$ (they claim a $O(l_{\max}^{2\alpha})$ approximation), where as ours is expressed in terms of 
 $\Delta = \frac{l_{\max}}{l_{\min}}$ (for example, $O(\log \Delta)$ approximation for arbitrary powers). 
In fact, for purposes of this comparison, these two entities are the same. 
If converted to the unbounded model, the results in \cite{DBLP:conf/infocom/AndrewsD09,Dinitz2010} will have to replace $l_{\max}$ with $\Delta$ (which is mentioned in \cite{DBLP:conf/infocom/AndrewsD09} as well), 
whereas in the bounded
model our dependence on $\Delta$ changes to dependence on $\frac{l_{\max}}{\max\{1, l_{\min}\}} \leq l_{\max}$. Thus for purpose of comparison between the two sets of results,
$\Delta$ and $l_{\max}$ are interchangeable, and we will simple use $\Delta$ in all cases.

\subsection{Results}
\subsubsection{Notion of capacity in a distributed setting}
Our goal is maximize the capacity of a wireless network, where capacity is defined as the number
of links that can be simultaneously scheduled in a single slot.
This definition concerning a  \emph{single} slot requires us to carefully consider what we mean
by capacity in a distributed setting. As we will see, our distributed algorithms involve multiple
rounds (ie, slots) where the same set of links are trying to transmit, resulting in a \emph{average}
large capacity after a while. There are a number of ways we can view this. First, we can consider
this to be a distributed capacity \emph{determination} algorithm. That is to say, an algorithm to find or
approximate the network capacity, for example to use it as a benchmark for performance evaluation.
This is perhaps the most theoretically satisfying explanation.

On the other hand we may wish to see the algorithm as one deployed to \emph{achieve} high capacity. If the rounds taken for the algorithm to converge is small, we may absorb the number of rounds needed
it into the approximation factor. Our results show that, theoretically speaking, the number of rounds taken is quite high.
On the other hand, simulations indicate that in practice it may not be as bad. 
Finally, if the communication over links are sustained (ie,
they require a large number of slots to complete), the notion of average capacity makes sense as well.

We can now state our results. We will prove the following.

\begin{theorem}
There exists a class of distributed algorithms such that if
 every sender uses an algorithm from this class, then after $O((\frac{n}{|OPT|})^2 \log n)$ rounds the
average number of successful connections is $\Omega(|OPT|)  = \Omega(|OPT_P|/\log \Delta)$ with probability
at least $1 - \frac{1}{n}$. Also, all algorithms in the class use uniform power, that is, each sender $s_v$
either transmits at full power $P_v = 1$, or does not transmit at all.
\label{thm:mainresult}
\end{theorem}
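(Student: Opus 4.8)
The plan is to instantiate the no-regret framework of \cite{Dinitz2010} with a two-strategy game---each sender either transmits at the uniform power $P_v=1$ or stays silent---and to choose a payoff that makes the structural analysis transparent. I would give each link $v$ the per-round utility $+1$ when it transmits and succeeds, $-\gamma$ when it transmits and fails, and $0$ when it is silent, for a suitable constant $\gamma\in(0,1]$. The no-regret guarantee is then used purely as a black box: after $T$ rounds every link's time-averaged utility is, up to an additive regret term $\epsilon=O(\sqrt{\log n/T})$ (uniformly over all $n$ links, with probability $1-1/n$), at least what it would have obtained from any fixed strategy, in particular from ``always transmit'' and from ``always silent.''

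The heart of the argument is a structural lemma showing that at this approximate equilibrium the average number of successful links, call it $\mathrm{SUCC}$, is $\Omega(|OPT|)$. I would extract two inequalities. Comparing each link against ``always silent'' and summing over all links bounds the average number of transmitters $\bar S$ by $O(\mathrm{SUCC})+O(n\epsilon)$: a link that transmitted while mostly failing would regret not being silent, so failures, and hence transmissions, are charged against successes. Comparing each $v\in OPT$ against ``always transmit'' and summing over $OPT$ gives $\mathrm{SUCC}\ge|OPT|-O(\bar S)-O(|OPT|\,\epsilon)$. The crucial ingredient is the feasibility of $OPT$: for any single transmitting link $w$ and any feasible (equivalently $1$-signal) set, the total out-affectance $\sum_{v\in OPT}a_w(v)$ is $O(1)$, a standard affectance bound from prior work applied after strengthening $OPT$ to a constant-signal set at the cost of a constant factor. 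Consequently, at each round the number of links of $OPT$ that would fail if they transmitted is at most $\sum_{w\in S_t}\sum_{v\in OPT}a_w(v)=O(|S_t|)$, so $OPT$ offers $|OPT|-O(|S_t|)$ potential successes, which the no-regret comparison converts into actual average successes. Combining the two inequalities eliminates $\bar S$ and yields $\mathrm{SUCC}\ge c_0\,|OPT|-O(n\epsilon)$ for some constant $c_0>0$.

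Forcing the error term below $|OPT|/2$ requires $\epsilon=O(|OPT|/n)$, which by the regret rate $\epsilon=O(\sqrt{\log n/T})$ holds after $T=O\!\big((n/|OPT|)^2\log n\big)$ rounds---exactly the claimed bound; the same high-probability regret statement, via a union bound over the $n$ links, supplies the $1-1/n$ confidence. The comparison to arbitrary power then reduces to the known relation $|OPT_P|=O(\log\Delta)\,|OPT|$ (uniform power loses at most a logarithmic factor against arbitrary power), turning $\Omega(|OPT|)$ into $\Omega(|OPT_P|/\log\Delta)$. I expect the main obstacle to be the structural lemma itself---marrying the two no-regret inequalities with the out-affectance bound so that $\bar S$ cancels cleanly. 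In particular one must (i) justify the ``$O(1)$ out-affectance into a feasible set'' step in a completely general metric with only $\alpha>0$, which is where the signal-strengthening preprocessing and the choice of $\gamma$ enter, and (ii) keep the failure-to-success charging in the first inequality tight enough that the final constant $c_0$ in front of $|OPT|$ remains bounded.
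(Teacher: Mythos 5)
Your overall framework---the two-action game, bandit no-regret used as a black box, a structural lemma showing the time-averaged transmissions/successes are $\Omega(|OPT|)$, and the final reduction to $OPT_P$---is the same as the paper's, and your way of combining the two regret comparisons (against ``always silent'' and against ``always transmit'') is a legitimate, even somewhat cleaner, variant of the paper's accounting. The genuine gap is the step you call standard: the claim that for any feasible set, after strengthening to a constant-signal set, every external link $w$ satisfies $\sum_{v\in OPT}a_w(v)=O(1)$. No such bound exists in prior work, and the statement is false. Consider $k$ collinear links (a valid configuration in any metric space, indeed on the Euclidean line), where link $v_i$ has its receiver at coordinate $4^i$ and its sender at coordinate $4^i(1+\epsilon_i)$, with $\epsilon_i^\alpha=\delta/i$ for a small constant $\delta>0$, and noise is negligible so $c_v\approx\beta$. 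A short computation shows the in-affectance on each $v_i$ is at most $\beta\delta\bigl(2^\alpha+\tfrac{3^{-\alpha}}{1-4^{-\alpha}}\bigr)$: links $j<i$ are at distance at least $4^i/2$ from $r_{v_i}$ and contribute at most $\beta 2^\alpha\delta/i$ each, while links $j>i$ contribute a geometrically decaying series. Hence, for suitable $\delta$, this is a $\tau$-signal set for any desired constant $\tau$, and signal strengthening is vacuous. Yet a sender $s_u$ placed at the origin has out-affectance $\sum_i a_u(v_i)=\sum_i\min\{1,\beta\epsilon_i^\alpha\}=\beta\delta\sum_i 1/i=\Theta(\log k)$ into this set. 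So the quantity you need to be $O(1)$ can grow unboundedly, and your second inequality, $\mathrm{SUCC}\ge|OPT|-O(\bar S)-O(|OPT|\,\epsilon)$, does not follow as argued.

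This is exactly the issue the paper's machinery is built to handle: feasibility bounds \emph{in}-affectance, but it does not bound the \emph{out}-affectance of an external link into the set. The counterexample above violates precisely the extra hypothesis that Lemma \ref{o1} imposes, namely that every member $z$ of the set has internal out-affectance $\sum_{v\in R}a_z(v)\le 2$ (in the example, $v_1$ has internal out-affectance $\Theta(\log k)$). The paper therefore first passes, via the averaging argument of Lemma \ref{markov1}, to a subset $OPT''$ with $|OPT''|=\Omega(|OPT|)$ in which every member has internal out-affectance at most $2$, and only then proves the $O(1)$ external out-affectance bound (Lemma \ref{o1}, via the closest-sender/closest-receiver argument and the separation bound of Lemma \ref{lem:ind-separation}). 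That lemma is the paper's main technical contribution, not a citation to prior work. Your argument is repairable: run the ``always transmit'' comparison over $OPT''$ rather than $OPT$, losing only a constant factor from the filtering, and then your two inequalities combine as you intend. A second, smaller gap: you invoke $|OPT_P|=O(\log\Delta)\,|OPT|$ as known, but prior to this paper it was known only for fading metrics with suitably large $\alpha$; for the theorem's stated generality (arbitrary metrics, any $\alpha>0$) this comparison must also be proved, which the paper does through its claim on nearly-equilength links.
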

This can be compared to the result in \cite{Dinitz2010}, where a $O(\Delta^{2 \alpha})$ approximation
factor is proven for the arbitrary power case. For uniform power, nothing better than $O(\Delta^{2 \alpha})$ 
is claimed in that work. Also, the results in \cite{Dinitz2010} work for the plane with $\alpha > 2$ (and a generalization
of euclidean metrics known as fading metrics with appropriately large $\alpha$). In comparison, our result is
stated for completely general metrics and for any $\alpha > 0$.

The following corollary is useful to state explicitly.
\begin{corollary}
There exists a randomized distributed  $O(1)$-approximation algorithm to determine, with high probability, 
the capacity 
of a wireless network under uniform power. For arbitrary power assignments, the same algorithm achieves
a  $O(\log \Delta)$ approximation, with high probability.
\end{corollary}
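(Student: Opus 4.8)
The plan is to obtain the corollary as a direct consequence of Theorem~\ref{thm:mainresult}, since that theorem already furnishes the lower bound on performance for both the uniform and the arbitrary power comparisons; what remains is to supply the matching (and easy) upper bound and to repackage the statement as an explicit approximation algorithm.

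First I would take the algorithm to be any member of the class promised by Theorem~\ref{thm:mainresult}, run it for the prescribed $O((\frac{n}{|OPT|})^2 \log n)$ rounds, and output the average number of successful connections over those rounds as the estimate of the capacity. Because every algorithm in the class is distributed, randomized, and uses uniform power, the resulting procedure is a randomized distributed algorithm, as the corollary requires.

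Next I would record the trivial upper bound. In any single round, the set of links that transmit successfully satisfies the SINR condition~(\ref{eq:sinr}) under uniform power, hence is a uniform-power feasible set; its size is therefore at most $|OPT|$, and averaging over the rounds keeps the reported quantity at most $|OPT|$. Since uniform power is itself a power assignment, every uniform-power feasible set is feasible under some power assignment, so $|OPT| \le |OPT_P|$ and the reported average is at most $|OPT_P|$ as well. Combining these upper bounds with the lower bound $\Omega(|OPT|) = \Omega(|OPT_P|/\log \Delta)$ from Theorem~\ref{thm:mainresult}, which holds with probability at least $1 - \frac{1}{n}$, shows that the output lies within a constant factor of $|OPT|$ and within an $O(\log \Delta)$ factor of $|OPT_P|$ with high probability, which is exactly the two claimed approximation guarantees.

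I do not expect any genuine obstacle here: all the substance resides in Theorem~\ref{thm:mainresult}, and the corollary is merely its restatement as an approximation algorithm. The only point meriting a line of care is the upper-bound observation above, which is what turns the one-sided guarantee of the theorem into a two-sided estimate of the capacity and thereby justifies the word ``determine'' in the statement.
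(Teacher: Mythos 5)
Your proposal is correct and matches the paper's (implicit) treatment: the paper states this corollary as an immediate consequence of Theorem~\ref{thm:mainresult} without further argument, and your derivation is exactly that, plus the routine but worthwhile observation that the per-round successful set is itself uniform-power feasible (hence of size at most $|OPT| \le |OPT_P|$), which makes the two-sided ``determination'' guarantee explicit.
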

What is remarkable here is that a such results for general metrics have not been published
even for centralized algorithms. The best known result for uniform power is $O(1)$-approximation for fading metrics
with $\alpha$ larger than the doubling dimension \cite{HW09}. For arbitrary power, the best
known algorithm is the $O(\log \log \Delta \cdot \log n)$ algorithm due to Halld\'{o}rsson \cite{us:esa09full},
again for fading metrics. Both of these algorithms are centralized. We are aware of a recent unpublished
result \cite{HM10} that has achieved $O(1)$-approximation for uniform power for general metrics (the algorithm there is
centralized as well).

\section{Game theoretic basics}
As discussed before, the basic game theoretic approach considered in this paper 
was developed by \cite{DBLP:conf/infocom/AndrewsD09}
 and the first algorithmic results based on that was derived in \cite{Dinitz2010}. In this section, we review
 and collect important concepts and results from those two papers. The reader may find more information
 in \cite{DBLP:conf/infocom/AndrewsD09,Dinitz2010}.

The games we are
interested in have $n$ players in which every player
has exactly two possible actions. Let $\mathcal{A} = \{0, 1\}^n$ be the space
of all possible actions (or possible ``strategy profiles") for the game, i.e. given a point
$A \in \mathcal{A}$, the $i^{th}$ coordinate $a_i$ represents the action used by
player $i$ in profile $A$. For each player $i$ there is a utility function $\alpha_i:\mathcal{A} \rightarrow \mathbb{R}$ denoting
how good certain actions for that player are. We will
sometimes want to consider modifications of strategy profiles:
given $A\in \mathcal{A}$, let $A \oplus a'_i$
be the strategy set obtained by player
$i$ changing its action from $a_i$ to $a'_i$. We will use superscripts to
denote time, so $A^t$ will be the action set at time $t$ and $a^t_i$ will
be the action taken by player $i$ at time $t$.
The following definition is crucial:

\begin{defn}
The regret of player $i$ at time $T$ given strategy
profiles $A^1, A^2, \ldots A^T$ is
$$\max_{a_i \in \{0,1\}} \frac{1}{T} \sum^T_{t = 1} \alpha_i (A^t \oplus a_i) - \frac{1}{T} \sum^T_{t = 1} \alpha_i (A^t)$$
\end{defn}
Having low regret essentially means that the player has done almost as
well on average as the best single action would have done.
We refer the reader to \cite{Dinitz2010} for a detailed discussion
of the ideas and historical context related to this notion. What is 
directly relevant though is the following powerful result, asserting the existence of a \emph{no-regret} algorithm:

\begin{theorem}[\cite{DBLP:journals/siamcomp/AuerCFS02}]
There is an algorithm that has regret at
most $O(\sqrt{\frac{\log(T/\delta)}{T}})$ with probability at least $1 - \delta$ for any
$\delta> 0$, for any game with a constant number of possible actions
per player.
\label{schapire}
\end{theorem}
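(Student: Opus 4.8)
The plan is to exhibit a concrete algorithm of the \emph{exponential weights with exploration} family (the EXP3/EXP3.P scheme) and prove the high-probability regret bound directly. Fix one player and write $K$ for its number of actions, which is constant (here $K=2$). Maintain a weight $w_i(t)$ for each action $i$, initialized to $w_i(1)=1$, and at each round sample the played action from
\[
p_i(t) = (1-\gamma)\frac{w_i(t)}{\sum_j w_j(t)} + \frac{\gamma}{K},
\]
a mixture of the exponentially weighted distribution with a uniform exploration term of strength $\gamma \in (0,1]$. The mixing guarantees $p_i(t) \ge \gamma/K$, which will be essential for controlling the variance of the reward estimates. Tuning $\gamma$ (and the learning rate $\eta$ in the weight update) against $T$ and $\delta$ is what ultimately produces the $\sqrt{\log(T/\delta)/T}$ rate.

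First I would build, from the single observed reward $x_{I_t}(t)$ of the played action $I_t$, an \emph{importance-weighted} estimator $\hat{x}_i(t) = \mathbf{1}[I_t = i]\, x_i(t)/p_i(t)$ for every action. This estimator is conditionally unbiased, $\mathbb{E}[\hat{x}_i(t) \mid \mathcal{F}_{t-1}] = x_i(t)$, and the update $w_i(t+1) = w_i(t)\exp(\eta\,\hat{x}_i(t))$ uses only quantities the player can compute. The core of the in-expectation analysis is a potential argument on $W_t = \sum_i w_i(t)$: one bounds $\ln(W_{T+1}/W_1)$ from below by the cumulative estimated reward of the best fixed action (since $W_{T+1}\ge w_j(T+1)$ for any $j$), and from above via the elementary inequality $e^x \le 1 + x + x^2$ for bounded $x$ (applicable because $\eta\hat{x}_i(t)$ is bounded thanks to $p_i(t)\ge\gamma/K$). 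Taking expectations yields the average-regret bound $\frac{1}{T}\sum_t(\text{best} - \text{algorithm}) = O(\sqrt{K\log K / T})$.

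The harder and more delicate step is upgrading this expectation bound to one holding with probability at least $1-\delta$, since the naive estimator has heavy upper-tail fluctuations. The standard device is to \emph{bias} the estimator by a confidence bonus, replacing $\hat{x}_i(t)$ by $\hat{x}_i(t) + \beta/p_i(t)$ for a suitable $\beta$ tied to $\log(1/\delta)$, which makes the running estimate an upper confidence bound on the true cumulative reward with high probability. The key technical object is the martingale difference sequence $\sum_t(\hat{x}_i(t) - x_i(t))$, to which one applies a Bernstein- or Azuma-type concentration inequality. I expect this concentration argument to be the main obstacle: the increments are only bounded by $1/p_i(t) \le K/\gamma$ and have conditional variance of the same order, so the confidence bonus must be chosen to cancel exactly the deviation term that Bernstein's inequality produces, and then $\gamma$, $\eta$, and $\beta$ must be tuned jointly so that all error terms collapse to the claimed rate simultaneously under a union bound over the $K$ actions. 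Because $K$ is constant, the $K$ and $\log K$ factors are absorbed into the constant, and dividing through by $T$ (as in the definition of regret) leaves exactly $O(\sqrt{\log(T/\delta)/T})$.
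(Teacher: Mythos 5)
The paper does not prove this statement at all: it is imported verbatim as a black box from the cited reference (Auer, Cesa-Bianchi, Freund, and Schapire), and the whole point of the paper's architecture is that the no-regret machinery can be used off the shelf while the authors' own contribution is the structural comparison between $Q$ and $|OPT|$. So there is no internal proof to compare against; the right benchmark is the proof in the cited source, and your sketch is essentially that proof. What you describe --- exponential weights mixed with uniform exploration $\gamma/K$, the importance-weighted estimator $\hat{x}_i(t) = \mathbf{1}[I_t = i]\,x_i(t)/p_i(t)$, the potential argument on $W_t$ via $e^x \le 1 + x + x^2$, and the upgrade to a high-probability bound by adding a confidence bonus $\beta/p_i(t)$ and applying Bernstein/Azuma-type concentration to the martingale $\sum_t (\hat{x}_i(t) - x_i(t))$ --- is exactly the EXP3.P construction and analysis from that paper, down to the joint tuning of $\eta$, $\gamma$, $\beta$. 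Two caveats worth flagging. First, yours is an honest plan rather than a completed proof: the concentration step and the parameter tuning, which you correctly identify as the delicate part, are named but not carried out, and that is where all the actual work lies. Second, two small adaptations to this paper's setting deserve a sentence: the utilities here take values in $\{-1,0,1\}$ rather than $[0,1]$, so one must rescale rewards affinely before running the analysis; and the ``adversary'' here (the other links' transmission choices) is adaptive rather than oblivious, so one should note that the EXP3.P high-probability guarantee holds against non-oblivious adversaries --- it does, precisely because the confidence-bound argument is pathwise, but the in-expectation EXP3 bound alone would not suffice for that reason.
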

This result is applicable for the \emph{bandit} model, where the player only knows the utility that it gained as a result
of taking an action,
not what would have happened if it played another action. This definition matches the situation in wireless
networks, where we assume that the sender knows if the transmission succeeded, but does not know anything
if it did not try to transmit at all. Also, since the algorithm is applicable to a single player, it is by definition ``distributed"
(ie, the algorithm solely depends on the utility the individual player gains in the course of the game). The moral
here is that with this tool at our disposal, to achieve a approximation algorithm, all we need is a structural result
comparing the optimum in one hand, and the average number of successful connections when each link has no regret
(ie, small regret) on the other.

A specific algorithm meeting the claims of Thm. \ref{schapire} is provided in \cite{DBLP:journals/siamcomp/AuerCFS02}. 
A similar guarantee was given for the Randomized Weighted
Majority Algorithm by Littlestone and Warmuth \cite{Littlestone:1994:WMA:184036.184040}. We will use this latter
algorithm in our simulations. These algorithms are all surprisingly simple.
 
 Now we can define the game:
Each sender $s_v$ is a
player, with two possible strategies: transmit at power $P_v = 1$ (full power) 
or don't transmit at all (ie, power $P_v = 0$). 
A transmitter has utility $1$ if
the transmission succeeds. It has utility -1 if it attempts to transmit but fails,
and utility 0 if it does not transmit at all.

Note that the game (and thus the algorithm) uses uniform power. So we can ask
two questions. First, how well does the algorithm do when compared to the optimum 
using uniform power? Second, how well does it do when compared to the optimum 
where links can use some arbitrary power assignment (within the upper bound of $1$)? We will
provide answers to both questions.
 
Let $T$ be some time at which all transmitters have regret
at most $\epsilon$. We seek to prove that the average number of
successful connections per slot up to time $T$ has been close to $|OPT|$.
Let $q_u$ be the fraction of times at which $s_u$
chose to transmit, and let $x_u$ be the fraction
of times at which $u$ transmitted successfully. Then $Q = \sum_u q_u$
 is the average number of attempted transmissions and $X = \sum_u x_u$
 is the average number of successful transmissions, so
we are trying to prove that $|X|$ is close to $|OPT|$. The following
lemma relates $|Q|$ and $|X|$.

\begin{lemma}[\cite{Dinitz2010}]
$X \leq Q \leq 2X + \epsilon n$
\end{lemma}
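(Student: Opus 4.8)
The plan is to handle the two inequalities separately, the left one being essentially free and the right one being where the regret hypothesis enters. For $X \leq Q$, I would argue termwise: a link can only count as successful in a slot if it actually attempted to transmit in that slot, so the set of slots contributing to $x_u$ is a subset of those contributing to $q_u$, giving $x_u \leq q_u$ for every link $u$. Summing over $u$ yields $X = \sum_u x_u \leq \sum_u q_u = Q$ immediately, with no appeal to the game dynamics at all.

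The substance is in $Q \leq 2X + \epsilon n$, and the first step I would take is to write down the time-averaged utility of a single player $u$ in closed form. Partitioning the slots by outcome — success (utility $+1$, fraction $x_u$ of slots), attempted-but-failed (utility $-1$, fraction $q_u - x_u$), and silent (utility $0$, fraction $1 - q_u$) — gives
\[
\frac{1}{T}\sum_{t=1}^{T} \alpha_u(A^t) = x_u - (q_u - x_u) = 2x_u - q_u.
\]
The key idea is then to compare this against the single fixed action $a_u = 0$ (never transmit). Since a silent link earns utility $0$ no matter what the other players do, we have $\alpha_u(A^t \oplus 0) = 0$ for every $t$, and hence $\frac{1}{T}\sum_t \alpha_u(A^t \oplus 0) = 0$.

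Now the regret of player $u$ is the maximum over $a_u \in \{0,1\}$ of the benchmark minus the realized average, so in particular it is at least the value obtained by the choice $a_u = 0$, namely $0 - (2x_u - q_u) = q_u - 2x_u$. The hypothesis that every transmitter has regret at most $\epsilon$ therefore forces $q_u - 2x_u \leq \epsilon$, i.e. $q_u \leq 2x_u + \epsilon$, and summing over all $n$ players gives $Q \leq 2X + \epsilon n$. I expect the only real obstacle to be conceptual rather than computational: one must correctly evaluate the realized utility as $2x_u - q_u$ (the $-1$ penalty for failed attempts is what makes the factor $2$ appear rather than $1$) and, crucially, recognize that the ``don't transmit'' action is the right comparator because its per-slot utility is identically zero independent of the profile $A^t$ — this is exactly what lets the regret bound translate into a bound relating $q_u$ and $x_u$ without knowing anything about the interference structure.
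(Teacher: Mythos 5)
Your proof is correct, and it is exactly the intended argument: the paper states this lemma without proof (citing Dinitz), and the underlying proof is precisely your computation that the realized average utility is $2x_u - q_u$, compared against the identically-zero utility of the ``never transmit'' action, so that regret at most $\epsilon$ gives $q_u \leq 2x_u + \epsilon$ per player, while $x_u \leq q_u$ is immediate since success requires an attempt.
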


Now if we can prove that $Q = \Omega(OPT)$ and then choose $\epsilon = \frac{|OPT|}{c_1 n}$
for a suitably large constant $c_1$ we can then assert $X = \Omega(OPT)$.
Setting
 $\delta=1/n^2$ and $T \geq (\frac{c_1 n}{OPT})^2 \log n$, by Thm. \ref{schapire}, we will achieve the
 required bound on $\epsilon$ for a single sender
with probability at least $1-1/n^2$ , and thus with probability $1 - \frac{1}{n}$
\emph{every} sender will have regret at most $\frac{OPT}{c_1 n}$.
The $O(1)$-approximation claimed in Thm. \ref{thm:mainresult} now follows.
All that is required is the bound $Q = \Omega(OPT)$ which we will prove in the next section.

The following important observation is embedded in \cite{Dinitz2010}, but it is useful to collect it in one lemma.
\begin{lemma}
Let $G = \{u: q_u \geq \frac{1}{2} - \epsilon\}$. Define $f_u$ as the fraction of time the link $u$ would have failed
if it tried to transmit,
irrespective of whether or not it did actually try. If the links have regret no more than $\epsilon$, then for all $u \in L\setminus G$, $f_u \geq \frac{1}{4}$.
\label{lem:pbad}
\end{lemma}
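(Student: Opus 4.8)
The plan is to turn the low-regret hypothesis for player $u$ into the claimed lower bound on $f_u$ by comparing $u$'s realized average utility with the average utility it would have obtained from the single fixed action ``always transmit.'' First I would record the relevant time-averaged utilities. Recalling that $u$ earns $+1$ on a success, $-1$ on a failed attempt, and $0$ when it stays silent, and that $u$ transmits in a $q_u$ fraction of rounds while succeeding in an $x_u$ fraction, its realized average utility is
\[
\frac{1}{T}\sum_{t=1}^{T}\alpha_u(A^t) = x_u - (q_u - x_u) = 2x_u - q_u .
\]
For the fixed action $a_u = 1$ (transmit in every round while leaving the other players' recorded actions $A^t$ unchanged), $u$ would succeed in precisely a $(1-f_u)$ fraction of rounds and fail in an $f_u$ fraction, by the very definition of $f_u$; hence that action has average utility $(1-f_u) - f_u = 1 - 2f_u$. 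This identification is the conceptual core of the argument: $f_u$ is defined counterfactually (``irrespective of whether $u$ actually tried'') exactly so that it records the failure rate of this comparator against the observed history of the other links.

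Next I would invoke the regret hypothesis. Since ``always transmit'' is one of the two admissible fixed actions, its average utility $1-2f_u$ is a lower bound for the maximum over fixed actions appearing in the definition of regret. Consequently, regret at most $\epsilon$ gives
\[
(1 - 2f_u) - (2x_u - q_u) \le \epsilon ,
\]
and there is no need to decide which fixed action is actually optimal.

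Finally I would eliminate $x_u$ with the trivial bound $x_u \le q_u$ (one cannot succeed more often than one transmits). Using $-2x_u \ge -2q_u$ in the last display yields $(1-2f_u) - q_u \le \epsilon$, that is, $q_u \ge 1 - 2f_u - \epsilon$. For any $u \in L \setminus G$ we have $q_u < \tfrac12 - \epsilon$, so $\,1 - 2f_u - \epsilon \le q_u < \tfrac12 - \epsilon$, which rearranges to $f_u > \tfrac14$, and in particular $f_u \ge \tfrac14$, as required.

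The whole argument is short, and essentially all of the content sits in the first step. I do not expect a genuine computational obstacle; the only subtle point is conceptual, namely recognizing that the quantity $f_u$ in the statement is exactly the loss rate of the ``always transmit'' comparator against the observed play of the other links, so that the no-regret guarantee can be applied verbatim. Once that correspondence is fixed, the remaining manipulation is elementary, and the slack $x_u \le q_u$ is the only inequality used beyond the regret bound itself.
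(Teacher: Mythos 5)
Your proof is correct and follows essentially the same route as the paper's: both compare the realized average utility (bounded above by $q_u$, since $x_u \le q_u$) against the ``always transmit'' comparator whose utility is $1-2f_u$ by the counterfactual definition of $f_u$, and then apply the regret bound. The only difference is presentational --- the paper argues by contradiction (assuming $f_u < \tfrac14$) while you run the same inequalities directly.
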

\begin{proof}
For contradiction, assume $f_u < \frac{1}{4}$. Then setting $q_u = 1$ would give an expected utility $> \frac{3}{4} - \frac{1}{4} > \frac{1}{2}$.
However, since $u \in L \setminus G$, $q_u < \frac{1}{2} - \epsilon$ thus its utility is also less than $\frac{1}{2} - \epsilon$.
Thus regret for $u$  is $> \frac{1}{2} - \frac{1}{2} + \epsilon = \epsilon$ which is a contradiction of the fact that $u$ has regret
at most $\epsilon$.
\end{proof}

\section{Derivation of Results}
First we need a basic result about feasible sets.
\begin{lemma}
Let $L$ be a feasible set. Define the set $L' = \{u \in L: \sum_{v \in L} a_u(v) \leq 2\}$.
Then, $|L'| \geq |L|/2$.
 \label{markov1}
\end{lemma}
\begin{proof}
Since $L$ is feasible, we know for all $v \in L$, $\sum_{u \in L}a_u(v) \leq 1$. Thus,
\begin{eqnarray}
\sum_{v\in L}\sum_{u \in L}a_u(v) \leq |L|
\label{setsum}
\end{eqnarray}
 If the claim of the Lemma is false then $|L'| < |L|/2$, thus $|L \setminus L' | > |L|/2$.
 Now,
 \begin{eqnarray*}
&& \sum_{v\in L}\sum_{u \in L}a_u(v) = \sum_{u\in L}\sum_{v \in L}a_u(v)  \\
& \geq & \sum_{u\in L \setminus L'}\sum_{v \in L}a_u(v) > |L|/2 \cdot 2  = |L|
 \end{eqnarray*}
which contradicts Eqn. \ref{setsum}.
The first inequality
 follows from the fact that $L \setminus L' \subseteq L$ and $a_u(v) \geq 0$. 
 The second inequality is a consequence 
 of the fact that  $\sum_{v \in L} a_u(v) > 2$ for all $u \in L \setminus L'$ and that $|L \setminus L' | > |L|/2$.
\end{proof}

Now we state the main technical Lemma.
\begin{lemma}
Suppose at time $T$ each sender has regret at most $\epsilon$ (where $\epsilon$ is very small). Then at time $T$,
$$Q = \Omega(OPT)$$
where $Q = \sum_{u \in L} q_u$ as defined before, and $OPT$ is the optimum capacity for uniform
power.
\label{lem:maintech}
\end{lemma}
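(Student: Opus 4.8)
The plan is to dichotomize the optimal links by how frequently they transmit, using $G = \{u : q_u \ge \frac{1}{2} - \epsilon\}$. Working with the uniform-power optimum $OPT$ (a feasible set), if $|OPT \cap G| \ge |OPT|/2$ then frequent transmission alone suffices: $Q = \sum_{u} q_u \ge \sum_{u \in OPT \cap G} q_u \ge \frac{1}{2}|OPT|\left(\frac{1}{2} - \epsilon\right) = \Omega(|OPT|)$. So the real work is in the complementary case, where $O' := OPT \setminus G$ has $|O'| \ge |OPT|/2$; here the optimal links mostly stay silent, and I must show that this silence is forced by heavy interference, which in turn forces $Q$ to be large.

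First I would convert ``silence'' into an interference statement. Every $u \in O'$ lies in $L \setminus G$, so Lemma~\ref{lem:pbad} gives $f_u \ge \frac{1}{4}$: in at least a quarter of the slots $u$ would fail. Writing $S_t$ for the set of links transmitting in slot $t$, the interference $u$ sees in slot $t$ is $\sum_{w \in S_t} a_w(u)$, and since the affectances are fixed scalars, its time average is exactly $\frac{1}{T}\sum_t \sum_{w \in S_t} a_w(u) = \sum_{w} q_w\, a_w(u)$. A failing slot contributes more than $1$ to this average, so $\sum_{w} q_w\, a_w(u) \ge f_u \ge \frac{1}{4}$. Summing over $u \in O'$ and exchanging the order of summation,
\[ \sum_{w} q_w \sum_{u \in O'} a_w(u) \;=\; \sum_{u \in O'} \sum_{w} q_w\, a_w(u) \;\ge\; \frac{1}{4}|O'| \;\ge\; \frac{1}{8}|OPT|. \]

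To finish, I would bound the left-hand side by $O(Q) = O(\sum_w q_w)$, which reduces to a single clean claim: every transmitter spreads only constant total affectance over the feasible set $O'$, i.e.\ $\sum_{u \in O'} a_w(u) = O(1)$ for every link $w$. Granting it, the display reads $O(1)\cdot Q \ge \frac{1}{8}|OPT|$, giving $Q = \Omega(|OPT|)$ and the lemma. I expect this claim to be the crux, and the one place where the generality of the setting (any metric, any $\alpha > 0$) bites: the textbook proof of such a bound uses geometric packing and needs $\alpha$ above the doubling dimension, which we are not allowed to assume. Instead I would argue purely from feasibility, in the spirit of Lemma~\ref{markov1}. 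The links of $O'$ that $w$ affects strongly must have their receivers within distance $O(\ell_u)$ of $s_w$; any two of them then lie close enough that the shorter link's sender heavily affects the longer link's receiver, so feasibility of $O'$ caps their number by $O(1)$. For the weakly affected links I would group $O'$ into distance bands around $s_w$ and observe, via the triangle inequality, that the affectance $w$ dumps on a far band is comparable to the interference that same band dumps on the link of $O'$ nearest $s_w$; since that nearest link is feasible it can absorb total interference at most $1$, so the far bands contribute $O(1)$ in aggregate. This replacement of metric packing by the $1$-signal constraint is the delicate step; the rest is bookkeeping.
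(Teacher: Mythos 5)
Your opening matches the paper's proof: the same dichotomy over $G$, the same use of Lemma~\ref{lem:pbad}, and your direct time-averaging bound $\sum_w q_w a_w(u) \ge f_u \ge \tfrac14$ is a harmless (indeed slightly cleaner) replacement for the paper's Markov-inequality step. The gap is exactly in what you call the crux: the claim that $\sum_{u \in O'} a_w(u) = O(1)$ for \emph{every} link $w$ whenever $O'$ is merely feasible is \textbf{false} in general metrics. Feasibility bounds the affectance each member \emph{receives}; it does not bound what a single external sender can \emph{deliver} to the set, and the two are not interchangeable because affectance is asymmetric. Concretely (line metric, $\beta = 1$, $N = 0$, so $a_u(v) = \min\{1, (\ell_v/d_{uv})^\alpha\}$): put $s_w$ at the origin, and for $i = 1,\dots,n$ set $\ell_i = 4^i$ and $t_i = (Ci)^{1/\alpha}$ for a suitable constant $C = C(\alpha) \ge 1$, placing $r_i$ at coordinate $t_i\ell_i$ and $s_i$ at $(1+t_i)\ell_i$. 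For $i < j$ one checks $(1+t_i)\ell_i \le t_j\ell_j/2$, hence $d_{ij} \ge t_m\ell_m/2$ with $m = \max\{i,j\}$, and the total affectance received by any link $j$ is at most $\frac{2^\alpha}{C}\bigl(1 + \frac{1}{4^\alpha - 1}\bigr) \le 1$ for $C$ large enough; the set is feasible. Yet $a_w(i) = 1/t_i^\alpha = 1/(Ci)$, so $\sum_i a_w(i) = \Theta(\log n)$. This example also shows why your band argument fails: the member link nearest $s_w$ is the \emph{shortest} one, and a far band of long links dumps only $O(4^{-\alpha j})$ affectance on it (interference onto link $1$ scales with $\ell_1^\alpha$), even though $w$ dumps $1/(Cj)$ onto each link of that band. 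The comparison you propose runs in the wrong direction of the asymmetry, so the nearest link's feasibility ``absorbs'' essentially nothing.

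The missing idea is the paper's intermediate step, which you skipped. The paper does not apply its constant-affectance lemma (Lemma~\ref{o1}) to $OPT' = OPT \setminus G$; it first passes, via the averaging argument of Lemma~\ref{markov1}, to a subset $OPT'' \subseteq OPT'$ with $|OPT''| \ge |OPT'|/2$ in which every member additionally has bounded \emph{out}-affectance: $\sum_{v \in OPT''} a_z(v) \le 2$ for all $z \in OPT''$. Lemma~\ref{o1} then bounds $\sum_{v \in OPT''} a_u(v)$ for an arbitrary link $u$ by comparing $s_u$ with the member $w$ whose sender is nearest to $s_u$: after signal strengthening and the separation bound of Lemma~\ref{lem:ind-separation}, every member receiver but one satisfies $d_{wz} \le 3\,d_{uz}$, whence $a_u(z) \le 3^\alpha a_w(z)$ --- a legitimate comparison, since both quantities refer to interference dumped on the \emph{same} third link $z$ and differ only in sender distance --- and the final summation invokes the hypothesis $\sum_z a_w(z) \le 2$, i.e.\ the out-affectance bound, not feasibility. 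In the counterexample above, the shortest link has out-affectance $\Theta(\log n)$ onto the set, so that instance is correctly excluded by the $OPT''$ refinement. Your argument for the strongly affected links is sound, but without the $OPT''$ step the weak affectances can sum to $\Theta(\log n)$, and your chain of inequalities then yields only $Q = \Omega(OPT/\log n)$, which falls short of the lemma.
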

\begin{proof}
As in Lemma \ref{lem:pbad} let $G = \{u \in OPT: q_u \geq \frac{1}{2} - \epsilon\}$ and let $OPT' = OPT \setminus G$.
If $|G| > |OPT|/2$, then $Q \geq \sum_{u \in G} q_u \geq (\frac{1}{2} - \epsilon) |G| = \Omega(|OPT|)$ and we would
be done. So let us assume that $|OPT'| \geq |OPT|/2$.

Now, let $OPT''  = \{u \in OPT': \sum_{v \in OPT'} a_u(v) \leq 2\}$. By Lemma \ref{markov1}, $|OPT''| \geq |OPT'|/2$,
and therefore, $|OPT''| \geq |OPT|/4$.

By Lemma \ref{lem:pbad}, for all $v \in OPT''$, $f_v \geq \frac{1}{4}$. Recall that $f_v$ is the fraction of time
that a transmission from $s_v$ fails. Defining $a(v)$ to be the total affectance on $v$ from other links, we can say
\begin{equation}
f_v \equiv \Pro(a(v) > 1) \geq \frac{1}{4}
\label{fvb1}
\end{equation}
Now average affectance on 
$v$ is $\Ex(a(v)) = \sum_{u \in L} q_u a_u(v)$.
 By Markov's inequality,  
 \begin{equation}
\Pro(a(v) \geq 5\Ex(a(v)) ) \leq \frac{1}{5}
\label{exavb}
 \end{equation}
Comparing Eqns. \ref{fvb1} and \ref{exavb}, we can easily see that $5\Ex(a(v)) \geq 1$, or

\begin{equation}
\sum_{u \in L} q_u a_u(v) \geq \frac{1}{5}; \forall v \in OPT''
\end{equation}
Note that the sum is over all links in $L$.

Summing all the inequalities together, we get
\begin{eqnarray}
\sum_{v \in OPT''}\sum_{u \in L} q_u a_u(v) \geq \frac{|OPT''|}{5} \nonumber \\
\Rightarrow \sum_{u \in L} q_u \sum_{v \in OPT''}  a_u(v) \geq \frac{|OPT''|}{5} \label{eqn:mainsum}
\end{eqnarray}

We now claim,  $\sum_{v \in OPT''}  a_u(v) \leq c$ for all $u \in L$ and some constant $c$. 
We will prove this claim in Lemma \ref{o1}. But let us
see how this leads to proof of the present Lemma. 
We get from
Eqn. \ref{eqn:mainsum} that,
\begin{eqnarray*}
&&c \sum_{u \in L} q_u  \geq \frac{|OPT''|}{5} \\
& \Rightarrow & \sum_{u \in L} q_u \geq \frac{|OPT''|}{5c} = \Omega(|OPT|)
\end{eqnarray*}
The last equality comes from recalling that $|OPT''| \geq |OPT|/4$.
\end{proof}

Now we prove the promised Lemma. First, we need a known result.
\begin{lemma}[\cite{us:esa09full}]
Let $\ell_u, \ell_v$ be links in a $q^\alpha$-signal set under any power assignment.
Then, $d_{uv} \cdot d_{vu} \ge q^2 \cdot \ell_u \ell_v$. 
\label{lem:ind-separation}
\end{lemma}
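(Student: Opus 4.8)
The plan is to use the symmetry between the two links together with the key observation that the \emph{product} $a^P_u(v)\cdot a^P_v(u)$ is independent of the power assignment, so the desired distance inequality can be derived without any knowledge of the powers. Since $\ell_u$ and $\ell_v$ lie in a $q^\alpha$-signal set, the total affectance on each is at most $1/q^\alpha$, and in particular $a^P_u(v) \le 1/q^\alpha$ and $a^P_v(u) \le 1/q^\alpha$. The first thing I would do is remove the $\min$ in the definition of affectance: when $q > 1$ we have $1/q^\alpha < 1$, so the branch $1$ cannot be the active one in $\min\{1, c_v \frac{P_u}{P_v}(\ell_v/d_{uv})^\alpha\} \le 1/q^\alpha$, and therefore $c_v \frac{P_u}{P_v}(\ell_v/d_{uv})^\alpha \le 1/q^\alpha$; symmetrically $c_u \frac{P_v}{P_u}(\ell_u/d_{vu})^\alpha \le 1/q^\alpha$.

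Next I would multiply these two inequalities, at which point the factors $P_u/P_v$ and $P_v/P_u$ cancel exactly, leaving
$$ c_u c_v \left(\frac{\ell_u \ell_v}{d_{uv}\, d_{vu}}\right)^{\alpha} \le \frac{1}{q^{2\alpha}}. $$
To obtain the stated form I would invoke $c_u c_v \ge 1$: from the definition $c_v = \beta/(1 - \beta N \ell_v^\alpha/P_v)$ the denominator lies in $(0,1]$ (the noise term is nonnegative, and strictly below $1$ whenever $v$ can succeed), so $c_v \ge \beta \ge 1$ and likewise $c_u \ge 1$. Dropping $c_u c_v$ only weakens the left-hand side, giving $(\ell_u\ell_v/(d_{uv}d_{vu}))^\alpha \le q^{-2\alpha}$; taking the positive $\alpha$-th root and rearranging gives $d_{uv}\, d_{vu} \ge q^2\, \ell_u \ell_v$, as required.

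The only delicate step, and the one I would check most carefully, is the elimination of the $\min$: one must confirm that the $q^\alpha$-signal condition forces each affectance to equal its analytic expression rather than the clamped value $1$, which is precisely where the bound $1/q^\alpha \le 1$ (i.e.\ $q \ge 1$) enters. Beyond that, the argument consists only of the cancellation of the power ratio — the genuinely nice feature that makes the statement hold for \emph{any} power assignment — and the elementary estimate $c_v \ge \beta \ge 1$, so I do not anticipate any further obstacle.
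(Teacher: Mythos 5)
The paper contains no proof of this lemma --- it is imported verbatim from \cite{us:esa09full} --- and your argument is essentially the proof given in that source: multiply the two symmetric affectance bounds so that the power ratios $P_u/P_v$ and $P_v/P_u$ cancel, then drop $c_u c_v \ge \beta^2 \ge 1$ and take positive $\alpha$-th roots. Your elimination of the $\min$ is the correct adaptation to this paper's clamped affectance definition, with the one quibble that it genuinely requires $q>1$ (not $q\ge 1$, as your closing sentence suggests --- for $q=1$ the clamp can hide a violation); this is harmless here, since the paper only invokes the lemma after signal strengthening, with $q=3$.
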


\begin{lemma}
\label{o1}
Assume $R$ is a feasible set under uniform power such that for all $z \in R$, $\sum_{v \in R} a_z(v) \leq 2$. Then for any other link $u$, 
$\sum_{v \in R}  a_u(v) = O(1)$.
\end{lemma}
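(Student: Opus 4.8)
The plan is to reduce the whole sum to the two structural hypotheses on $R$ — the incoming feasibility $\sum_{w}a_w(\cdot)\le 1$ and the given outgoing bound $\sum_{w}a_{(\cdot)}(w)\le 2$ — by charging through a \emph{single} representative link, using nothing but the triangle inequality so that the argument survives in an arbitrary metric (no doubling/Euclidean structure). Concretely, I would let $z\in R$ be a link whose receiver is closest to $s_u$, i.e.\ minimizing $d_{uz}=d(s_u,r_z)$, and write $\rho=\ell_z$. Then I split $R$ into $F=\{v: d_{uv}\ge\rho\text{ or }\ell_v\ge\rho\}$ and its complement $N=\{v: d_{uv}<\rho\text{ and }\ell_v<\rho\}$. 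The point of the split is that links in $F$ can be \emph{charged} to the outgoing affectance of $z$, while the ``near and short'' links in $N$ — where $a_u(v)$ may be as large as $1$ — are instead \emph{counted} against the incoming budget at $z$.

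For $v\in F$ I would compare $a_u(v)$ with $a_z(v)$. Since $z$ minimizes $d_{uz}$ we have $d_{uz}\le d_{uv}$, so by the triangle inequality $d_{zv}=d(s_z,r_v)\le \ell_z+d(r_z,r_v)\le \ell_z+d_{uz}+d_{uv}\le \rho+2d_{uv}\le 3\max(\rho,d_{uv})$. On $F$ one has $\max(\rho,d_{uv})\le\max(\ell_v,d_{uv})$, so substituting into $a_z(v)=\min\{1,c_v(\ell_v/d_{zv})^\alpha\}$ and using $c_v\ge\beta/2$ with $\beta\ge 1$ yields $a_u(v)\le 2\cdot 3^{\alpha}\,a_z(v)$ (a two-line case check according to whether $d_{uv}$ or $\ell_v$ realizes the maximum, together with $\min\{1,3^{-\alpha}x\}\ge 3^{-\alpha}\min\{1,x\}$). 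Summing and invoking the hypothesis $\sum_{v\in R}a_z(v)\le 2$ gives $\sum_{v\in F}a_u(v)\le 4\cdot 3^{\alpha}=O(1)$.

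For $v\in N$ the receiver $r_v$ lies within $\rho$ of $s_u$ and $\ell_v<\rho$; since $a_u(v)$ need not be small here, I would bound $|N|$ instead of charging. The key observation is that every such $v$ throws large affectance \emph{onto} $z$: from $d_{vz}=d(s_v,r_z)\le \ell_v+d(r_v,r_z)\le \ell_v+d_{uv}+d_{uz}<3\rho=3\ell_z$ we get $a_v(z)=\min\{1,c_z(\ell_z/d_{vz})^\alpha\}>c_z 3^{-\alpha}=\Omega(1)$. Feasibility of $R$ applied to the link $z$ gives $\sum_{w\in R}a_w(z)\le 1$, so $|N|\cdot\Omega(1)\le 1$ and hence $|N|=O(3^{\alpha})$. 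As each term is at most $1$, $\sum_{v\in N}a_u(v)\le |N|=O(1)$, and adding the two parts finishes the lemma.

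The step I expect to be the crux is the set $N$. These are precisely the links on which $u$'s affectance is \emph{not} small, and in a general (non-doubling) metric their number cannot be controlled by any packing or volume argument — this is exactly why plain feasibility of $R$ is insufficient and the extra hypothesis is present. The resolution above routes $N$ through the incoming affectance budget at $z$ rather than through geometry, while the remaining (far or long) links are absorbed by the outgoing budget; everything else is routine, namely the handling of the $\min\{1,\cdot\}$ caps and tracking how the constants depend on $\alpha$ and $\beta$. I note that Lemma~\ref{lem:ind-separation} could replace some of the triangle-inequality estimates, but it is not strictly needed for this argument.
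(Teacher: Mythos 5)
Your proof is correct up to one easily repaired detail, and it takes a genuinely different route from the paper's. The paper also charges the sum through representative links, but it picks the link $w$ whose \emph{sender} is closest to $s_u$, first applies the signal-strengthening decomposition of \cite{HW09} to reduce to $3^\alpha$-signal sets, and then invokes Lemma \ref{lem:ind-separation} to show that at most \emph{one} link $w'$ (the one with receiver closest to $s_u$) can have its receiver within $\frac{1}{2}d(s_w,s_u)$ of $s_u$; every other link is then charged to $w$'s outgoing budget $\sum_{v} a_w(v)\le 2$. You instead work with the link $z$ whose \emph{receiver} is closest to $s_u$, never decompose $R$, and never use the separation lemma: your far-or-long links $F$ are charged to $z$'s outgoing budget much as in the paper, while the near-and-short links $N$ --- which the paper's argument rules out almost entirely --- are merely \emph{counted}: each one puts affectance at least $\tfrac{1}{2}3^{-\alpha}$ onto $z$, so feasibility at $z$ (i.e.\ $\sum_{w\in R} a_w(z)\le 1$) caps $|N|$ at $2\cdot 3^\alpha$. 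What your route buys: it is more self-contained (no signal strengthening, no Lemma \ref{lem:ind-separation}) and gives a better constant, roughly $1+6\cdot 3^\alpha$ versus the paper's bound of order $9^\alpha/\beta$ (the paper pays the number of strengthened subsets times the per-subset bound). What the paper's route buys: it exposes the stronger geometric fact that in a $3^\alpha$-signal set only one receiver can lie close to $s_u$, and it reuses machinery that is standard in this literature.

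The repair: your charging inequality $a_u(v)\le 2\cdot 3^{\alpha}a_z(v)$ fails for $v=z$ itself, since the paper's convention sets $a_z(z)=0$ while $a_u(z)$ may be as large as $1$; note that $z\in F$ because $\ell_z\ge\rho$, so this case does arise. Treat $v=z$ separately using $a_u(z)\le 1$, which adds $1$ to your bound for $F$ --- exactly analogous to how the paper handles its exceptional link $w'$. Everything else checks out: the triangle-inequality estimates, the case analysis via $\min\{1,3^{-\alpha}x\}\ge 3^{-\alpha}\min\{1,x\}$, the use of $c_v\ge\beta/2\ge 1/2$ from Eqn.~(\ref{cvlimit}), and the fact that $z\notin N$ (so the convention $a_z(z)=0$ does not interfere with the counting step).
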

\begin{proof}
We use the signal strengthening technique by Halld\'{o}rsson and Wattenhofer \cite{HW09}. That is, we decompose 
the set $R$ to $\lceil 2 \cdot 3^\alpha/\beta \rceil$ sets, each a $3^\alpha$-signal set.
We prove the claim for one such set, since there are only constantly many such sets, the overall claim holds.
Let us reuse the notation $R$ to be such a $3^\alpha$ signal set.

Consider a link $w \in R$ such that $d(s_w, s_u)$ is minimum, ie, for all $v \in R$, $d(s_v, s_u) \geq d(s_w, s_u)$.
Also consider the link $w' \in R$ such that $d(r_{w'}, s_u)$ is minimum, ie, for all $v \in R$, $d(r_{v}, s_u) \geq d(r_{w'}, s_u)$.

Let, $D = d( s_w, s_u)$. Now we claim that for all links, $z \in R, z \neq w'$
\begin{equation}
d(s_u, r_z) \geq \frac{1}{2} D
\label{eqn:dist1}
\end{equation}
For contradiction, assume $d(s_u, r_z) < \frac{1}{2} D$. Then,  $d(r_{w'}, s_u) < \frac{1}{2} D$, by definition.
Now, again by the definition of $w$, $d(s_z, s_u) \geq D$ and $d(s_{w'}, s_u) \geq D$. Thus $l_z > \frac{D}{2}$
and $l_{w'} > \frac{D}{2}$. On the other hand $d(r_z, r_{w'}) < \frac{D}{2} + \frac{D}{2} < D$.
Now, $d_{w'z} \cdot d_{zw'} \leq (l_{w'} + d(r_z, r_{w'}))(l_{z} + d(r_z, r_{w'})) <  (l_{w'} + D)(l_{z} + D)
< 9 l_{w'} l_z$, which contradicts Lemma \ref{lem:ind-separation}.

Now, $d_{wz}  = d(s_w, r_z) \leq d(s_w, s_u) + d(s_u, r_z) \leq 3 d(s_u, r_z) = 3 d_{uz}$, where the last inequality
follows from Eqn. \ref{eqn:dist1}. Then, for all $z \in R$, $z \neq w'$, 
$a_u(z) = c_z \left(\frac{l_z}{d_{uz}}\right)^\alpha \leq c_z 3^{\alpha}\left(\frac{l_z}{d_{wz}}\right)^\alpha = 3^{\alpha} a_w(z)$.
Finally,
\begin{eqnarray*}
&& \sum_{v \in R}  a_u(v) = a_u(w') + \sum_{v \in R \setminus \{w'\}}  a_u(v)\\
&\leq &1 + \sum_{v \in R \setminus \{w'\}}  a_u(v) \leq 1 + 3^\alpha \sum_{v \in R \setminus \{wÕ\}}  a_z(v) \\
&\leq &1 + 3^\alpha \cdot 2 = O(1)
\end{eqnarray*}
This completes the proof.
\end{proof}

The $O(1)$-approximation for uniform power claim in Thm. \ref{thm:mainresult}
 now follows from Lemma \ref{lem:maintech}. 
From Lemma \ref{lem:maintech} we can also
derive a corollary needed for the claim in Thm. \ref{thm:mainresult} comparing the performance  of the algorithm to $OPT_P$.
\begin{corollary}
Suppose at time $T$ each sender has regret at most $\epsilon$ (where $\epsilon$ is very small). Then at time $T$,
$$Q = \Omega(OPT_P/\log \Delta)$$
where $OPT_P$ is the optimum capacity under arbitrary power assignments.
\end{corollary}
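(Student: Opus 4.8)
The plan is to reduce the corollary to the purely structural inequality $|OPT| = \Omega(|OPT_P|/\log\Delta)$ and then simply invoke Lemma \ref{lem:maintech}, which already gives $Q = \Omega(|OPT|)$. So the entire task is to show that the best uniform-power feasible set is within an $O(\log\Delta)$ factor of the best arbitrary-power feasible set. The catch is that this must hold in a completely general metric, so no volume, packing, or doubling-dimension argument is available to bound how many links crowd a receiver.

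First I would prune and classify. Using the arbitrary-power analogue of Lemma \ref{markov1} (the Markov counting argument is identical once $a_u(v)$ is replaced by $a^P_u(v)$, since it uses only $\sum_u a^P_u(v)\le 1$), I extract $OPT_P' \subseteq OPT_P$ with $|OPT_P'|\ge |OPT_P|/2$ such that every $v\in OPT_P'$ has bounded out-affectance $\sum_{u\in OPT_P} a^P_v(u)\le 2$. I then partition $OPT_P'$ into $\lceil\log\Delta\rceil$ length classes, each consisting of links whose lengths agree up to a factor of $2$, and let $C$ be the largest class, so $|C| = \Omega(|OPT_P|/\log\Delta)$. By construction $C$ is equilength, is feasible under $P$ (hence in-affectance $\sum_{u\in C}a^P_u(v)\le 1$), and inherits $\sum_{u\in C}a^P_v(u)\le 2$.

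The crux is to show $C$ is $O(1)$-feasible under uniform power, i.e. $\sum_{u\in C} a_u(v)=O(1)$ for every $v\in C$; a constant number of uniform-feasible subsets (via signal strengthening as in Lemma \ref{o1}) then witnesses $|OPT|=\Omega(|C|)$. Writing $\ell$ for the common length scale and $\hat a$ for the uncapped affectance, I would first note that Lemma \ref{lem:ind-separation} (with $q=1$) gives the power-independent separation $d_{uv}d_{vu}\ge \ell_u\ell_v\ge \ell^2$, while the triangle inequality gives $d_{vu}\le d_{uv}+\ell_u+\ell_v\le d_{uv}+4\ell$. Combining these forces $d_{uv} \ge (\sqrt5-2)\ell = \Omega(\ell)$ for every pair, and hence $d_{uv}/d_{vu}=\Theta(1)$. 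Together with $c_v/c_u=\Theta(1)$ and $\ell_v/\ell_u=\Theta(1)$ this yields two facts: each pairwise uniform affectance $\hat a_u(v)$ is bounded by an absolute constant $K$, and the uniform affectance is symmetric up to constants, $\hat a_u(v)=\Theta(\hat a_v(u))=\Theta(\sqrt{\hat a_u(v)\hat a_v(u)})$.

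Finally I would exploit that the powers cancel in the pairwise product, $\hat a_u(v)\hat a_v(u)=\hat a^P_u(v)\hat a^P_v(u)$, so $\hat a_u(v)=\Theta(\sqrt{\hat a^P_u(v)\hat a^P_v(u)})$. Using the per-pair bound $\hat a_u(v)\le K$, an elementary case check shows $\sqrt{\hat a^P_u(v)\hat a^P_v(u)}\le K\bigl(a^P_u(v)+a^P_v(u)\bigr)$ even for the capped power-affectances, whence
\[
\sum_{u\in C} a_u(v)\ \le\ \sum_{u\in C}\hat a_u(v)\ \le\ O(1)\Bigl(\sum_{u\in C}a^P_u(v)+\sum_{u\in C}a^P_v(u)\Bigr)\ \le\ O(1)(1+2)=O(1).
\]
This gives $|OPT|=\Omega(|C|)=\Omega(|OPT_P|/\log\Delta)$, and combined with Lemma \ref{lem:maintech} finishes the corollary. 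I expect the main obstacle to be precisely this equilength-to-uniform step: because the metric is arbitrary there is no packing bound on crowding, so the argument cannot count nearby links and must instead ride on the symmetry of uniform affectance together with the power-invariance of the pairwise product, with the $\min\{1,\cdot\}$ caps absorbed by the per-pair constant $K$.
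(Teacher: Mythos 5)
Your proof is correct, and its skeleton matches the paper's: both reduce the corollary to the structural claim $|OPT_P| = O(|OPT|\cdot\log\Delta)$ (Claim \ref{logdelta1}) plus Lemma \ref{lem:maintech}, and both obtain that claim by partitioning into $O(\log\Delta)$ equilength classes and showing that an equilength set feasible under arbitrary power contains a constant-fraction subset feasible under uniform power. Where you genuinely diverge is in how that equilength step is executed. The paper first signal-strengthens $OPT_P$ to a $6^\alpha$-signal set (which is what gives it the separation $d_{vu}\ge 2(\ell_u+\ell_v)$ directly), then proves an \emph{aggregate} bound $\sum_{v}\sum_{u}a_u(v)\le 4^\alpha|O_1|$ using only in-affectance (feasibility) together with the cancellation $\frac{P_u}{P_v}+\frac{P_v}{P_u}\ge 1$, and only at the end converts this into a pointwise bound by a Markov/averaging step, followed by a second signal strengthening. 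You instead prove a \emph{pointwise} bound $\sum_{u\in C}a_u(v)=O(1)$ for every $v\in C$; this forces you to control out-affectance as well, which you get by moving the Markov pruning (the arbitrary-power analogue of Lemma \ref{markov1}) to the front, but it lets you (i) derive the needed separation from bare feasibility ($q=1$ in Lemma \ref{lem:ind-separation}) via the quadratic inequality $d_{uv}(d_{uv}+4\ell)\ge\ell^2$, avoiding the first strengthening step, and (ii) cancel the powers through the pairwise product $\hat a^P_u(v)\,\hat a^P_v(u)$, which is the geometric-mean form of the same AM--GM cancellation the paper applies to the power ratios. Your explicit case analysis for the $\min\{1,\cdot\}$ caps is also a point of rigor that the paper glosses over (it drops the caps and the $c_v$ factors ``for simplicity''). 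Net effect: the paper's route needs only feasibility of $OPT_P$ but uses two signal strengthenings and a final averaging; yours needs the extra out-affectance pruning up front but is pointwise, uses a single strengthening, and treats the caps cleanly. Both are valid proofs of the corollary.
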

This is a simple consequence of the following structural claim.
\begin{claim}
$OPT_P = O(OPT \cdot \log \Delta)$
\label{logdelta1}
\end{claim}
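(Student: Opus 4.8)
The plan is to show that the optimum under arbitrary powers, $OPT_P$, can be partitioned into $O(\log \Delta)$ groups, where each group is (up to a constant factor) schedulable under \emph{uniform} power, so that one of these groups must contain $\Omega(OPT_P / \log \Delta)$ links, giving $OPT \geq \Omega(OPT_P / \log \Delta)$ and hence the claim. The natural way to form these groups is by \emph{link length}: since all lengths lie in $[\ell_{\min}, \ell_{\max}]$ with $\ell_{\max}/\ell_{\min} = \Delta$, we can bucket links by length class, say $\ell_v \in [2^i \ell_{\min}, 2^{i+1} \ell_{\min})$ for $i = 0, 1, \dots, \lceil \log \Delta \rceil$. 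This yields $O(\log \Delta)$ classes. The key intuition is that within a single length class all links have roughly equal length, and for links of comparable length an arbitrary power assignment behaves essentially like uniform power: if $\ell_u \approx \ell_w$, then the ratio $P_w/P_v$ appearing in the affectance $a^P_w(v)$ cannot distort distances too severely, so a set feasible under arbitrary power within one length class should be feasible (after losing only a constant factor) under uniform power as well.

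First I would take $S = OPT_P$, the optimal feasible set under some power assignment $P$, and partition it into the $O(\log \Delta)$ length classes $S_0, S_1, \dots$ described above. By averaging, there is an index $i$ with $|S_i| \geq |OPT_P| / O(\log \Delta)$. Second, I would argue that $S_i$, being a subset of the feasible set $S$, is itself feasible under $P$, and then invoke the near-equivalence of arbitrary and uniform power for equal-length links: I would show that $S_i$ is a $\delta$-signal set under uniform power for some constant $\delta$, possibly after discarding a constant fraction of links. Concretely, for two links $u, w$ of comparable length in a $q^\alpha$-signal set, Lemma \ref{lem:ind-separation} gives the separation $d_{uw} \cdot d_{wu} \geq q^2 \ell_u \ell_w$; when $\ell_u$ and $\ell_w$ agree to within a factor of $2$, this geometric separation guarantee translates directly into a bound on the uniform-power affectance $a_u(w)$ (which no longer carries the $P_w/P_u$ factor) that is within a constant of the arbitrary-power bound. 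Summing these bounded affectances over $S_i$ shows the total uniform-power affectance on any link is $O(1)$, so a constant fraction of $S_i$ forms a uniform-power feasible set, giving $|OPT| \geq \Omega(|S_i|) = \Omega(|OPT_P|/\log \Delta)$.

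The main obstacle I anticipate is making the second step fully rigorous, namely controlling the role of the power ratios when passing from arbitrary to uniform power. Even within a single length class, an adversarial power assignment could in principle assign wildly different powers $P_u$ to different links, and the affectance $a^P_w(v) = \min\{1, c_v (P_w/P_v)(\ell_v/d_{wv})^\alpha\}$ depends on this ratio. The feasibility of $S_i$ under $P$ constrains the geometry (via Lemma \ref{lem:ind-separation}, which holds \emph{under any power assignment}), but I must verify that the geometric separation alone, once lengths are pinned to within a factor of $2$, is enough to bound the \emph{uniform}-power affectance independent of the original powers. I expect this to work precisely because Lemma \ref{lem:ind-separation} quantifies separation in terms of lengths and the signal parameter $q$ only, with no residual power dependence, so the geometry established by arbitrary-power feasibility is exactly what uniform-power feasibility needs; the careful bookkeeping of constants (the signal-strengthening factor, the length-class width, and the constant lost to Lemma \ref{markov1}-style arguments) is the routine but delicate part.
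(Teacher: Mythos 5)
Your first step---partitioning into $O(\log\Delta)$ length classes and reducing to the statement ``$OPT_P = O(OPT)$ for equilength links''---is exactly the paper's decomposition and is fine. The gap is in your second step, and it is fatal at the level of generality the paper needs (arbitrary metric spaces). You propose to use feasibility of $S_i$ under $P$ \emph{only} through the pairwise separation of Lemma \ref{lem:ind-separation}, $d_{uw}\cdot d_{wu}\ge q^2\ell_u\ell_w$, and then assert that ``summing these bounded affectances over $S_i$ shows the total uniform-power affectance on any link is $O(1)$.'' But separation plus the triangle inequality and lengths within a factor $2$ only gives a \emph{per-pair} bound $d_{uw}=\Omega(q\ell)$, hence $a_u(w)=O(1/q^\alpha)$ for each pair; summing $|S_i|$ such terms gives $O(|S_i|/q^\alpha)$, not $O(1)$. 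In the Euclidean plane (or fading metrics) one would close this gap with a packing argument---only boundedly many links at each distance scale, geometrically decaying contributions---but no such packing exists in a general metric, and pairwise separation genuinely does not suffice there. Concretely, take $2n$ points forming $n$ links of length $\ell$ with every cross distance equal to $10\ell$ (this is a legitimate metric). Every pair satisfies $d_{uw}d_{wu}=100\,\ell_u\ell_w$, i.e.\ the conclusion of Lemma \ref{lem:ind-separation} with $q=10$, yet the total uniform-power affectance on each link is $\Theta(n/10^\alpha)$, so no subset of more than $O(10^\alpha)$ links (a constant, independent of $n$) is uniform-feasible, and discarding a constant fraction \`a la Lemma \ref{markov1} does not help. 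This configuration is not feasible under any power assignment, so it does not contradict the claim itself---but it satisfies everything your argument actually uses, which shows your argument cannot work. (Your parenthetical assertion that the uniform affectance is ``within a constant of the arbitrary-power bound'' is also false pairwise: if $P_u \ll P_w$, then $a^P_u(w)$ can be arbitrarily smaller than $a_u(w)$.)

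The paper closes exactly this gap by never reducing feasibility under $P$ to pairwise geometry. It keeps the \emph{summed} feasibility inequality $\sum_{v\in O_1}\sum_{u\in O_1}a^P_u(v)\le |O_1|/6^\alpha$ for a $6^\alpha$-signal subset $O_1$, groups terms into symmetric pairs $\frac{P_u}{P_v}a_u(v)+\frac{P_v}{P_u}a_v(u)$, shows $a_u(v)$ and $a_v(u)$ are within a factor $4^\alpha$ of each other for equilength links (this is where Lemma \ref{lem:ind-separation} and the triangle inequality are actually used), and then exploits $\frac{P_u}{P_v}+\frac{P_v}{P_u}\ge 1$ to cancel the unknown power ratios from the sum. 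This yields $\sum_{u,v\in O_1}a_u(v)=O(|O_1|)$ under \emph{uniform} power, after which an averaging argument (as in Lemma \ref{markov1}) and one more round of signal strengthening extract a uniform-feasible subset of size $\Omega(|O_1|)=\Omega(|OPT_P|)$. That symmetrization step, which uses the power assignment in aggregate rather than pair by pair, is the idea missing from your proposal.
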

This claim is known for fading metrics with suitably large $\alpha$ (and thus for the plane with $\alpha > 2$), and
is due to Halld\'{o}rsson \cite{us:esa09full}. So, for fading metrics, the corollary is already implied.

We now discuss the case for general metrics. The proof of Claim \ref{logdelta1} in \cite{us:esa09full} has two
parts. First, the links are partitioned into $\log \Delta$ subsets of nearly-equilength links, ie, links whose lengths vary by no more
than 2. Then we show that for each of these sets, $OPT_P = O(OPT)$. Since there are $\log \Delta$ such sets, the 
Claim \ref{logdelta1} is proven. Now, the partitioning into $\log \Delta$ sets does not depend on the metric space used. 
It is the claim about nearly equi-length links
that must be shown to be true for general metrics. This has been recently done for
a large class of power assignments (not only uniform power) in \cite{HM10}. Since the result is yet unpublished, we
provide a self-contained proof of the claim for the case of uniform power, which is all we need.

\begin{claim}
For a set of links in any metric space where the length varies no more than a factor of 2, $OPT_P = O(OPT)$.
\end{claim}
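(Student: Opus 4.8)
The plan is to show that any power-feasible set can be thinned to a uniform-feasible set of comparable size. Fix a power assignment $P$ and a feasible set realizing $OPT_P$. As in the proof of Lemma~\ref{o1}, I would first apply the signal-strengthening partition of \cite{HW09} to split this set into $O(1)$ subsets, each a $\tau$-signal set under $P$ for a large constant $\tau$ to be fixed later; keeping the largest subset $S$ costs only a constant factor, so $|S| = \Omega(OPT_P)$, and by hypothesis all lengths in $S$ lie in an interval $[\ell, 2\ell]$. The goal then becomes: produce a uniform-feasible subset of $S$ of size $\Omega(|S|)$, which immediately gives $OPT = \Omega(OPT_P)$.

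I would reduce the whole statement to a single inequality, namely that the \emph{total} uniform affectance $W = \sum_{v \in S}\sum_{u \in S} a_u(v)$ satisfies $W = O(|S|)$. Granting this, a uniform-feasible subset of linear size follows by routine random sparsification: include each link of $S$ independently with a small constant probability $p$, so that the expected in-affectance of a surviving link is at most $1/2$ and, by Markov, the expected number of surviving links whose in-affectance exceeds $1$ is only a small fraction of the expected surviving size; deleting those overloaded links leaves a $1$-signal (hence feasible) set of size $\Omega(|S|)$. Thus everything rests on the bound on $W$.

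The heart of the argument, and the step I expect to be hardest, is $W = O(|S|)$. Two ingredients drive it. First, Lemma~\ref{lem:ind-separation} applied to the $\tau$-signal set $S$ gives $d_{uv}\, d_{vu} \ge \tau^{2/\alpha}\ell_u \ell_v$ for every pair; since $a_u(v) \le \beta(\ell_v/d_{uv})^\alpha$, this yields the pairwise product bound $a_u(v)\, a_v(u) \le (\beta/\tau)^2$, which holds for \emph{any} lengths. Second, and this is where the near-equilength hypothesis is essential, the asymmetric cross-separation can be upgraded to a uniform \emph{sender} separation: whichever of $d_{uv}, d_{vu}$ is the larger is at least $\tau^{1/\alpha}\ell$, so by the triangle inequality and $\ell_u,\ell_v \le 2\ell$ we obtain $d(s_u, s_{u'}) \ge (\tau^{1/\alpha} - 2)\ell =: \rho\ell$ for all distinct $u,u'$, and choosing $\tau$ large makes $\rho > 2$.

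With $\rho > 2$ the triangle inequality alone limits how many senders can cluster near a receiver (two senders within distance $R$ of $r_v$ lie within $2R$ of each other, hence cannot both be closer than $\rho\ell/2$), and I would convert this into the bound on $W$ by the nearest-sender/nearest-receiver charging of Lemma~\ref{o1}, grouping senders into distance shells around each receiver. The genuine obstacle is that general metrics are \emph{not} doubling, so a single far shell may still contain many $\rho\ell$-separated senders; the per-pair product bound does not suffice on its own, since it permits the pathological configuration in which every pairwise affectance equals $\beta/\tau$ and the in-affectances grow linearly in $|S|$. Ruling this out requires feeding in the \emph{global} power-feasibility of $S$, charging the shell contributions against the bounded power-affectance $\sum_{u,v} a^P_u(v) \le |S|/\tau$ in combination with the sender separation, so that the reweighting $P_v/P_u$ cannot concentrate many low-power senders around a high-power receiver. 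Making this charging quantitative in a non-doubling metric is the crux; the remaining reduction and sparsification are routine.
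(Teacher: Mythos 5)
Your setup and your reduction are the right ones: signal-strengthening to a $\tau$-signal set $S$ under $P$ with $|S| = \Omega(OPT_P)$, reducing everything to the single inequality $W = \sum_{v \in S}\sum_{u \in S} a_u(v) = O(|S|)$, and then recovering a uniform-feasible subset of size $\Omega(|S|)$ (the paper does this last step by an averaging argument plus one more round of signal strengthening rather than random sparsification, an immaterial difference). But the proposal stops exactly at the crux: you do not prove $W = O(|S|)$, and you say so yourself (``making this charging quantitative in a non-doubling metric is the crux''). This is a genuine gap, not a finishing detail. The route you sketch --- the product bound $a_u(v)\,a_v(u) \le (\beta/\tau)^2$ from Lemma~\ref{lem:ind-separation}, sender separation, and shells around receivers --- is a packing argument, and packing is precisely what a general (non-doubling) metric refuses to support; your own pathological configuration shows the pairwise information you extracted is too weak, and the vague plan to ``feed in global power-feasibility'' through the shells is never made concrete.

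The missing idea is a symmetrization of the power-feasibility sum that makes the unknown powers cancel outright, using no geometry beyond the triangle inequality. Group $\sum_{v \in S}\sum_{u \in S} a^P_u(v) \le |S|/\tau$ into unordered pairs: the pair $\{u,v\}$ contributes (up to the constants $c_v$) $\frac{P_u}{P_v}\left(\frac{\ell_v}{d_{uv}}\right)^\alpha + \frac{P_v}{P_u}\left(\frac{\ell_u}{d_{vu}}\right)^\alpha = \frac{P_u}{P_v}\,a_u(v) + \frac{P_v}{P_u}\,a_v(u)$, where $a$ denotes uniform affectance. For near-equilength links in a strong signal set the two directions are comparable: assuming without loss of generality $P_u \le P_v$, the signal condition forces $d_{vu} \ge \tau^{1/\alpha}\ell_u \ge 2(\ell_u+\ell_v)$ (for $\tau \ge 6^\alpha$, using $\ell_v \le 2\ell_u$), and the triangle inequality then gives $d_{uv} \ge d_{vu} - (\ell_u+\ell_v) \ge d_{vu}/2$ and $d_{vu} \ge 2d_{uv}/3$; combined with the factor-$2$ length bound this yields $a_u(v) \le 4^\alpha a_v(u)$ and $a_v(u) \le 4^\alpha a_u(v)$. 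Hence each pair contributes at least $4^{-\alpha}\bigl(\frac{P_u}{P_v} + \frac{P_v}{P_u}\bigr)a_u(v) \ge 4^{-\alpha}a_u(v)$, since $x + 1/x \ge 2 \ge 1$. The power ratios have vanished, and summing over pairs gives exactly your target $W = O(|S|)$. Note this instantly kills the scenario you were worried about: if every pairwise uniform affectance were $\beta/\tau$, the symmetrized sum would be $\Theta(|S|^2/\tau)$, contradicting power feasibility. You in fact held all the ingredients --- near-equilength, the triangle inequality, the strong-signal separation --- but spent them on sender separation; their effective use is to prove $d_{uv} \asymp d_{vu}$, hence $a_u(v) \asymp a_v(u)$, after which the elementary inequality $x + 1/x \ge 2$ does the work you were trying to extract from packing.
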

\begin{proof}
Let us consider  an optimal $6^\alpha$-signal subset $O_1$ of $OPT_P$.
By the signal strengthening property \cite{HW09}, $|O_1| = \Omega(|OPT_P|)$.

Now let use assume that maximum capacity is achieved using power assignment $P$ that assigns to each link $v$ 
the power $P_v$.
Then, for all $v \in O_1$, $\sum_{u \in O_1} a^P_u(v) \leq \frac{1}{6^{\alpha}}$. Summing these inequalities
we get, 
\begin{eqnarray}
&& \frac{1}{6^{\alpha}} |O_1| \geq \sum_{v\in O_1} \sum_{u \in O_1} a^P_u(v) =
\sum_{v\in O_1} \sum_{u \in O_1} \frac{P_u}{P_v}\left(\frac{l_v}{d_{uv}}\right)^{\alpha} \nonumber \\
& = & \sum_{u, v \in O_1} \left(\frac{P_u}{P_v} \cdot \left(\frac{l_v}{d_{uv}}\right)^{\alpha}\right) + \left(\frac{P_v}{P_u} \cdot \left(\frac{l_u}{d_{vu}}\right)^{\alpha}\right)  \nonumber \\
& \Rightarrow & \sum_{u, v \in O_1} \left(\frac{P_u}{P_v} \cdot a_u(v) + \frac{P_v}{P_u} \cdot a_v(u)\right) \leq \frac{1}{6^{\alpha}} |O_1|
\label{nearlyeq1}
\end{eqnarray}
The calculations above are manipulations and implications of the definitions of $a^P_u(v)$ and $a_u(v)$.
We have not included $c_u$ and $c_v$ in the computations for simplicity. It suffices to notice that since
uniform power uses the maximum power, $c_u$ (and $c_v$) is smaller for $P_{\max}$ compared to its value for
$P_u$ (or $P_v$), thus the direction of the inequality works out the right way.

We claim that $a_u(v) \leq 4^{\alpha} a_v(u)$ and $a_v(u) \leq 4^{\alpha} a_u(v)$,
in other words,
$a_v(u)$ and $a_v(u)$ are comparable to each other within a constant.
First, $l_u \le 2 l_v$ and $l_v \leq 2 l_u$
by definition.  
We claim that $d_{uv} \le 2 d_{vu}$ for all $u, v$. Once we prove that, the claim between
the relation between $a_u(v)$ and $a_v(u)$ is a matter of routine calculation.
To show this, assume $P_u \leq P_v$
and we will prove the inequality in both directions. Once again ignoring $c_v$ without
loss of generality,
$\frac{P_v}{P_u}\left(\frac{l_u}{d_{vu}}\right)^{\alpha} \leq 6^{-\alpha}$. From this we get 
$d_{vu} \geq 6 l_u \geq 2(l_u + l_v)$. 
By the triangular inequality $d_{uv} \ge d_{vu} -
(l_u + l_v) \ge d_{vu}/2$ and 
$d_{vu} \ge \max(2(l_u+l_v),d_{uv}-(l_u+l_v)) \ge 2 d_{uv}/3$. 

Now continuing with the left-hand side expression of  Eqn. \ref{nearlyeq1} and using the near equality of
$a_v(u)$ and $a_u(v)$
\begin{eqnarray*}
& & \sum_{u, v \in O_1} \left(\frac{P_u}{P_v} \cdot a_u(v) + \frac{P_v}{P_u} \cdot a_v(u)\right) \leq \frac{1}{6^{\alpha}} |O_1|\\
&\Rightarrow& \sum_{u, v \in O_1} \left(\frac{P_u}{P_v} + \frac{P_v}{P_u}\right) \cdot a_u(v) \leq (4/6)^{\alpha} |O_1| \\
\end{eqnarray*}
But $(\frac{P_u}{P_v} + \frac{P_v}{P_u}) \geq 1$ for any $P_u, P_v \geq 0$. Thus,
$\sum_{u, v \in O_1} a_u(v) \leq (4/6)^{\alpha} |O_1|$, and hence $\sum_{u, v \in O_1} (a_u(v) + a_v(u)) \leq 4^{\alpha} |O_1|$. Therefore $\sum_{v \in O_1} \sum_{u \in O_1} a_u(v) \leq 4^{\alpha} |O_1|$. By a simple averaging argument,
we see that there must be a set $O_2 \subseteq O_1$ of size at least $|O_1|/2$ such that for each $v \in O_2$,
$\sum_{u \in O_2} a_u(v) \leq 2 \cdot 4^{\alpha}$. Thus, $O_2$ is a $\frac{1}{2 \cdot 4^{\alpha}}$-signal set under uniform
power. Finally, we can use signal strengthening technique once again to assert that the existence of a feasible set 
$O_3 \subseteq O_2$ of size
$\Omega(|O_2|)$. Now, $|O_3| = \Omega(|O_2|) = \Omega(|O_1|) = \Omega(|OPT_P|)$ completing the proof.
\end{proof}

\subsection{Other fixed power assignments}
In this subsection, we will discuss two other fixed power schemes that have been used in the literature.
The first is the linear power scheme where $P_v = l_v$ and the second is the mean power assignment
where $P_v = \sqrt{l_v}$. Linear power is of interest since it is power efficient in presence of noise, where
as mean power assignment has been very successful in devising centralized approximation algorithms \cite{us:esa09full}.
In both cases we will assume that the maximum possible power $P_{\max}$ is such that all links can use the 
relevant power scheme. In the same vein, in both cases, the game changes only in the detail that instead of
transmitting at full power, senders use linear (or mean) power when they decide to transmit.

\begin{corollary}
If all links use linear power, and if after time $T$ each node has small regret $\epsilon$ then
$Q = \Omega(|OPT_l|/\Delta^{\alpha})$, where $OPT_l$ is the optimum for linear power. For mean power, under the same conditions, $Q = \Omega(|OPT_m|/\Delta^{\alpha/2})$ where $OPT_m$ is the optimum for mean power.
\end{corollary}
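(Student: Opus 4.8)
The plan is to observe that the entire reduction leading to Lemma \ref{lem:maintech} is insensitive to the particular power assignment, and then to re-prove only the one geometric ingredient that is power-specific, namely Lemma \ref{o1}. First I would note that Lemmas \ref{lem:pbad}, \ref{markov1}, and the arithmetic of Lemma \ref{lem:maintech} use nothing about the powers beyond the facts that affectances are nonnegative and that feasibility of a set $S$ means $\sum_{u\in S}a^P_u(v)\le 1$ for each $v\in S$; the standing assumption \eqref{cvlimit} keeps every $c_z=\Theta(1)$ regardless of the scheme, so the constant factors are unaffected. These statements therefore transfer verbatim with the fixed-power affectance $a^P$ in place of $a$, the failure event for a link $v$ becoming $\{a^P(v)>1\}$ where $a^P(v)=\sum_u q_u a^P_u(v)$. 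Running the argument of Lemma \ref{lem:maintech} with $OPT_l$ (resp.\ $OPT_m$) in the role of $OPT$ produces a high-signal subset $OPT''$ of size $\Omega(|OPT_l|)$ on which $\sum_u q_u a^P_u(v)\ge \tfrac15$, and after swapping the order of summation the whole statement reduces, exactly as before, to a single bound: $\sum_{v\in OPT''}a^P_u(v)\le c$ for every external link $u$, where now $c$ will depend on $\Delta$.

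The core of the work is thus to generalize Lemma \ref{o1} to the fixed-power affectance. I would rerun its proof unchanged through the geometric part: decompose $OPT''$ into constantly many high-signal sets, and within one such set pick the link $w$ whose sender is closest to $s_u$ and the link $w'$ whose receiver is closest to $s_u$. Since Lemma \ref{lem:ind-separation} holds for \emph{every} power assignment, the separation argument establishing $d(s_u,r_z)\ge \tfrac12 d(s_w,s_u)$ for $z\ne w'$, and hence $d_{wz}\le 3 d_{uz}$, goes through without change. The one new feature appears in the affectance-transfer step: since
\[
 a^P_u(z)=c_z\,\frac{P_u}{P_z}\Bigl(\frac{\ell_z}{d_{uz}}\Bigr)^\alpha
 \quad\text{and}\quad
 a^P_w(z)=c_z\,\frac{P_w}{P_z}\Bigl(\frac{\ell_z}{d_{wz}}\Bigr)^\alpha,
\]
the comparison now carries the extra transmit-power ratio, giving $a^P_u(z)\le 3^\alpha\,\tfrac{P_u}{P_w}\,a^P_w(z)$ rather than $a^P_u(z)\le 3^\alpha a^P_w(z)$. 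Summing over $z\neq w'$, using $\sum_{z}a^P_w(z)\le 2$, and adding the capped term $a^P_u(w')\le 1$ yields $\sum_{v\in OPT''}a^P_u(v)=O\!\bigl(3^\alpha\,\tfrac{P_u}{P_w}\bigr)$.

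It remains to bound the power ratio $P_u/P_w$, and this is where $\Delta$ enters. For linear power $P_v=\ell_v$ the ratio equals $\ell_u/\ell_w\le \Delta$; for mean power $P_v=\sqrt{\ell_v}$ it equals $\sqrt{\ell_u/\ell_w}\le\sqrt{\Delta}$. Feeding these into the transfer inequality bounds $c$, and substituting $c$ into the concluding arithmetic $Q\ge |OPT''|/(5c)$ delivers the advertised $Q=\Omega(|OPT_l|/\Delta^\alpha)$ and $Q=\Omega(|OPT_m|/\Delta^{\alpha/2})$.

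I expect the main obstacle to be precisely this power-ratio step. In the uniform case the ratio is identically $1$ and $c=O(1)$; here the nearest interfering sender $s_w$ may belong to a link arbitrarily longer or shorter than $u$, so the factor cannot be avoided and all of the corollary's $\Delta$-dependence is concentrated in it. The genuinely delicate point is making the interaction between this ratio and the path-loss exponent precise, so that the exponent of $\Delta$ comes out exactly as stated (here one must check carefully how the factor $\tfrac{P_u}{P_w}$ combines with the $\alpha$-th-power distance comparison, and handle the boundary link $w'$ and the $c_z$ adjustments separately); this is the one place where I would want to verify the bookkeeping rather than treat the calculation as routine.
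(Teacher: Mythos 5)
You follow essentially the same route as the paper, whose own ``proof'' is just the remark that the result follows from Lemma \ref{lem:maintech} together with the relation between affectance under linear (mean) power and affectance under uniform power; your proposal is precisely that reduction spelled out, with the entire $\Delta$-dependence concentrated in the power ratio $P_u/P_w$ that enters when Lemma \ref{o1} is redone for a non-uniform fixed assignment. The power-agnostic transfer of Lemmas \ref{lem:pbad} and \ref{markov1}, the appeal to Lemma \ref{lem:ind-separation} (which the paper states for arbitrary power assignments), and the separate accounting of the capped term $a^P_u(w')$ are all exactly what the omitted details amount to.

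The one point you flagged --- how the ratio interacts with the path-loss exponent --- does need the resolution you suspected. Taking the paper's notation $P_v=\ell_v$ and $P_v=\sqrt{\ell_v}$ at face value, your argument gives $Q=\Omega(|OPT_l|/\Delta)$ and $Q=\Omega(|OPT_m|/\sqrt{\Delta})$, which imply the stated bounds only when $\alpha\ge 1$; since the paper allows any $\alpha>0$, this leaves a gap for $\alpha<1$. The intended reading is the standard one, $P_v=\ell_v^{\alpha}$ (linear) and $P_v=\ell_v^{\alpha/2}$ (mean): then $P_u/P_w=(\ell_u/\ell_w)^{\alpha}\le\Delta^{\alpha}$, resp.\ $\Delta^{\alpha/2}$, so your transfer inequality yields $c=O(3^{\alpha}\Delta^{\alpha})$, resp.\ $O(3^{\alpha}\Delta^{\alpha/2})$, and hence exactly the advertised bounds for every $\alpha>0$. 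That this is what the paper means is forced by its own tightness example: the blocking link there must have $P_w=D^{\alpha}$ for its interference at the short receivers to be roughly $2^{\alpha}>\beta$, whereas with $P_w=D$ that interference is $D/(D/2+1)^{\alpha}=o(1)$ for $\alpha>1$, the short links would then have regret, and the example would collapse. Finally, the cap bookkeeping you deferred is benign for a reason worth recording: the inequality $a^P_u(z)\le 3^{\alpha}\frac{P_u}{P_w}a^P_w(z)$ can only fail at links $z$ with $a^P_w(z)=1$, of which there are at most two since $\sum_{z}a^P_w(z)\le 2$, and each such $z$ contributes at most $1$ to $\sum_{z}a^P_u(z)$; the $c_z$ factors cancel outright because both affectances being compared are on the same receiver $z$.
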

Thus the bounds are fairly weak, yet still better than that of \cite{Dinitz2010}. The proof follows from Lemma \ref{lem:maintech} and the relation between 
affectance under linear (mean) power and affectance under uniform power. We omit the details.

Both bounds are tight. We shall provide an overview of the proof for linear power (the proof for mean power is similar). Consider a set $L = \{w\} \cup S$. Assume
 $l_w = D \gg 1$. The set $S$ is a set of $\left(\frac{D}{3}\right)^{\alpha}$ links of length $1$. The distances between the various 
 links are defined by the relation $d(s_w, s_v) = \frac{D}{2}$ for all $v \in S$. All other distances are defined
 by transitivity. Thus, for example $d_{v_1v_2} = \frac{D}{2} + \frac{D}{2} + 1 = D + 1$ (for all $v_1, v_2 \in S$).
 Note that $\Delta = D$.
Assume $\beta = 1$ and $N = 0$. It is easy to see that $S$ is feasible under linear power. Thus,
$|OPT_l| = \left(\frac{D}{3}\right)^{\alpha} = \Theta(\Delta^{\alpha})$. Now set $q_w = 1$ and 
$q_v = 0$ for all $v \in S$. We can see that each link has no-regret, because $w$ can
always transmit successfully, and while $w$ transmits no other $v \in S$ has an incentive to transmit.
Thus $Q = 1 = \Theta\left(\frac{|OPT_l|}{\Delta^{\alpha}}\right)$.

\section{Simulations}

We ran simulations to see how the distributed algorithm performs using fixed power schemes, namely uniform power where each transmitter tries to transmit with power equal to $P_{\max}$, linear power where $P_v = l_v$ and mean power where $P_v = \sqrt{l_v}$.  To implement the no-regret algorithm that each link is using, we used the Randomized Weighted Majority Algorithm of Littlestone and Warmuth \cite{Littlestone:1994:WMA:184036.184040}.  We initialized the weight of transmitting as $1$, and the weight of staying silent was also set as $1$.  In each iteration, the transmitter randomly selects whether to transmit or to stay silent, based on the weights of the corresponding actions.  The weights are updated only if the transmitter chooses to transmit.  If a transmission is successful, the weight of staying silent is multiplied by $0.5$, while if the transmission is unsuccessful, the weight of transmitting is multiplied by $0.5$.

The simulations are in the vanilla physical model, with zero ambient noise and where senders and receivers are points in the Euclidean plane.  The senders are placed uniformly at random in a square of size $100 \times 100$ in the Euclidean plane.  The receiver for each sender is placed randomly in a disc of radius $d_{\max}$ around the sender by selecting an angle uniformly at random from $[0 \; 2\pi]$ and selecting the distance between the sender and the receiver uniformly at random from $[0 \; d_{\max}]$.

The simulations with uniform power are the similar to the simulations in \cite{Dinitz2010}, so to make any comparison easier, we will usually set $\alpha = 2.1$ and $\beta = 0.5$.  As mentioned in \cite{Dinitz2010} and shown in Figures \ref{fig:increasingsize} and \ref{fig:increasingsize2}, changing these parameters does not change the trends by very much.

For comparison, we use the centralized single shot scheduling algorithm by Halld\'{o}rsson and Wattenhofer \cite{HW09}.  Their algorithm is a simple greedy algorithm, where the links are processed in a non-decreasing order of length, and each link is included in the set of active senders if the affectance of the link, caused by the current set of active links is less than or equal to a constant $c$, where $$c = \frac{1}{ \left( 2+\max \left( 2, (2^6 3 \beta \frac{\alpha-1}{\alpha-2})^{\frac{1}{\alpha}} \right) \right)^{\alpha} }$$  Even though Halld\'{o}rsson and Wattenhofer's (HW) algorithm is an $O(1)$-approximation algorithm, we realized that for the algorithm to be competitive on our random instances, the constant was too low, resulting in very small sets of active senders.  To make the HW algorithm more competitive, we improved the simulation results by using a binary search for the best constant  to determine if a link is included in the active set for each problem instance, instead of just using the fixed constant $c$.

\begin{figure}
\begin{center}
\includegraphics[width=3.6in,height=2.7in]{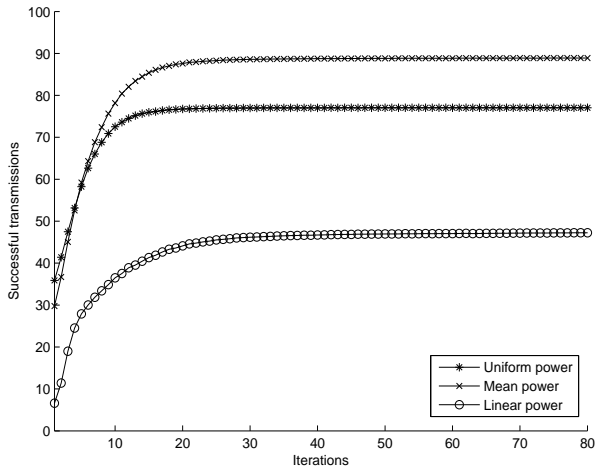}
\caption{Number of successful transmissions in each iteration for uniform, mean and linear power control schemes.  The problem instances are based on random topology with $d_{\max} = 10$, $\alpha = 2.1$ and $\beta = 0.5$.} \label{fig:convergence}
\end{center}
\end{figure}

Figure \ref{fig:convergence} shows how the distributed algorithm converges using different fixed power schemes and 200 links.  The topology is random with the maximum distance of 10 between a sender and its receiver.  Figure \ref{fig:convergence} shows the average results for the instance after solving it 10 times using the random distributed algorithm.  Regardless of the actual power control scheme, the convergence is very quick, the algorithm has usually converged to a stable solution within $30-40$ iterations, which is much less than the theoretical requirement of $O((\frac{n}{|OPT|})^2 \log n)$ iterations before the approximation guarantee can be made.  The length of the links does not seem to affect the convergence, even when $d_{\max} = 60$, the distributed algorithm reaches convergence after only $30-40$ iterations.  However, the actual number of iterations required to reach a stable solution is dependent on $n$, the number of links, but it seems to grow much more slowly than the theoretical bound indicates, although if $|OPT|$ is close to $n$, then the theoretically required number of iterations grows only as $\log n$.  When the number of links increases, the number of iterations necessary to reach convergence increases slightly.  Using $n=1000$, the distributed algorithm required about $40-50$ to reach convergence.  When the number of links is large, the mean and linear power have usually no successful links during the first iterations, while the uniform power gives successful links from the first iteration and grows steadily until it reaches convergence, similar to Figure \ref{fig:convergence}.  

\begin{figure}
\begin{center}
\includegraphics[width=3.6in,height=2.7in]{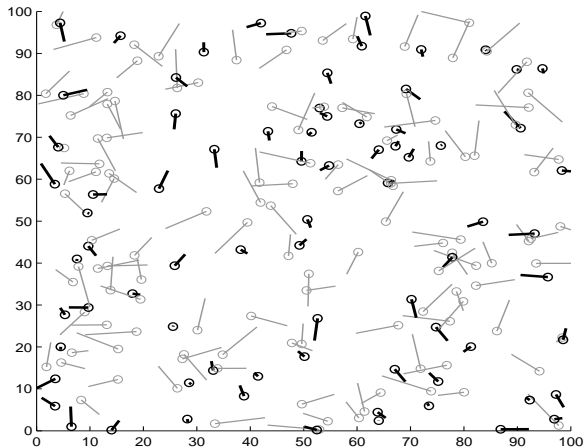}
\caption{The random instance used for comparing the results of the distributed algorithm to the optimal solution.  The transmitters are shown as small circles while the link is shown as a line segment.  The instance has $200$ links that are placed randomly using $d_{\max} = 10$.  The links that are drawn in bold denote the optimal solution using uniform power.} \label{fig:links}
\end{center}
\end{figure}

We tried to solve the instance shown in Figure \ref{fig:links} optimally using Gurobi 3.0.1 and a mixed-integer formulation of the problem.  The problem instance has $200$ links and $d_{\max}=10$, with $\alpha = 2.1$ and $\beta=0.5$.  The optimal solution with uniform power is shown in Figure \ref{fig:links} and contains $77$ active links, while the average number of successful links over 10 runs of the distributed algorithm is $73.9$, so the solution of the distributed algorithm with uniform power is very close to optimal.  The results for mean power are fairly good, the distributed algorithm finds on average $84.4$ successful links, while the incumbent solution was $97$ after running the solver for 8.5 days.  However, using linear power does not seem to work well with the distributed algorithm, the distributed algorithm only managed to find $44.9$ successful links on average for this particular instance, while the size of the optimal solution using linear power is $94$.  The HW algorithm only managed to find a set of 8 active links, while the number of active links for the HW algorithm with binary search was 51.  The capacity maximization problem in wireless networks under the SINR constraints seem to be very difficult to solve optimally, even some relatively small instances with $n = 200$ and a straightforward implementation of the mixed-integer problem could not be solved after running the solver constantly for a week on a 2.8 GHz. Intel i7 Quad-Core machine with 4 GB of memory.

\begin{figure}
\begin{center}
\includegraphics[width=3.6in,height=2.7in]{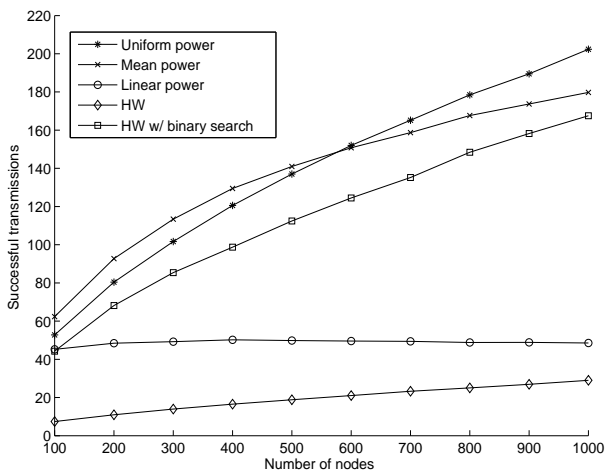}
\caption{Number of successful transmissions after 100 iterations for uniform, mean and linear power control schemes, and the results for the HW algorithm and the HW algorithm with binary search.  The problem instances are based on random topology with $d_{\max} = 10$, $\alpha = 2.1$ and $\beta = 0.5$.} \label{fig:increasingsize}
\end{center}
\end{figure}

Figure \ref{fig:increasingsize} shows the results for increasing number of links.  We created 100 instances for each value of $n$ and iterated the random distributed algorithm for 100 rounds.  We see that as $n$ gets larger, most of the algorithms do better, with the exception of the distributed algorithm using linear power.  The HW algorithm without the binary search does improve its results as $n$ gets larger, but its results are usually around $1/10$ of the best results so it is never competitive.  However, once we add the binary search to the algorithm, the performance improves greatly and the algorithm actually becomes one of the best options as the problem instances get denser.  It is interesting to note that the mean power control scheme is the best for the more sparse instances, but once the number of nodes increases above $500$, the uniform power control scheme becomes better and seems to grow more rapidly with $n$.

\begin{figure}
\begin{center}
\includegraphics[width=3.6in,height=2.7in]{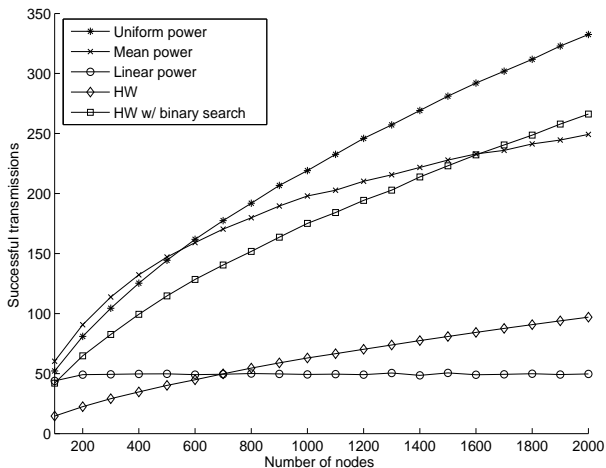}
\caption{Number of successful transmissions after 100 iterations for uniform, mean and linear power schemes, and the results for the HW algorithm and the HW algorithm with binary search.  The problem instances are based on random topology with $d_{\max} = 10$, $\alpha = 3.1$ and $\beta = 1.0$.} \label{fig:increasingsize2}
\end{center}
\end{figure}

To explore further the performance of the algorithms and how it changes as $n$ grows, as well as finding out where the intersection between mean power and HW with binary search occurs, we increased the number of links to 2000 as shown in Figure \ref{fig:increasingsize2}.  Here we use $\alpha = 3.1$ and $\beta = 1.0$, but, as mentioned earlier, the trends are very similar.  As before, $d_{\max} = 10$.  We see that, as in Figure \ref{fig:increasingsize}, the mean power scheme performs best when the number of links is below $500$, but once the number of links grows above $600$, using uniform power gives us the largest number of active links.  The number of successful links using the mean power assignment with the distributed algorithm does not grow as quickly as either using uniform power or the HW algorithms with binary search, so once the number of links grows above $1600$, the HW algorithm with binary search outperforms the distributed algorithm using the mean power scheme.  It is interesting to note that while the unmodified HW algorithm is not competitive with the more successful algorithms, the number of links still grows with $n$, whereas the performance of the distributed algorithm with linear power actually deteriorates slightly as $n$ grows larger.

\begin{figure}
\begin{center}
\includegraphics[width=3.6in,height=2.7in]{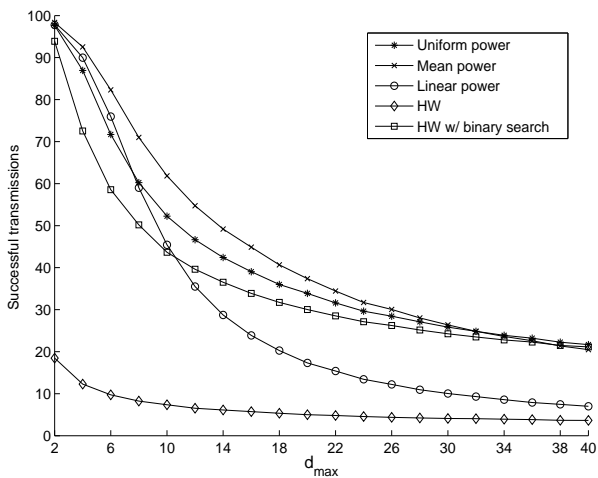}
\caption{Number of successful transmissions after 1000 iterations for uniform, mean and linear power control schemes, and the results for the HW algorithm and the HW algorithm with binary search.  The problem instances are based on random topology with $n = 100$, $\alpha = 2.1$ and $\beta = 0.5$, while $d_{\max}$ ranges from 2 to 40.} \label{fig:dmax}
\end{center}
\end{figure}

To see how the algorithms depend on the distance between the sender and the receiver, we created instances with $d_{\max}$ ranging from $2$ to $40$, where all the points are located in a square of size $100 \times 100$ as before.  Figure \ref{fig:dmax} shows how the performance of the algorithms changes as $d_{\max}$ increases.  We set $n = 100$ with $\alpha = 2.1$ and $\beta = 0.5$ as before.  The results are the average performance of the algorithms over $1000$ random instances.  The unmodified HW algorithm is still very bad and we also notice that using linear power for the distributed algorithm is only competitive with the other power control schemes when $d_{\max}$ is small, but its performance deteriorates rapidly when $d_{\max}$ grows, so it quickly becomes much worse than either the uniform or the mean power control scheme.  Figure \ref{fig:dmax} uses $n = 100$, so when we look at the results in Figures \ref{fig:increasingsize} and \ref{fig:increasingsize2}, it is not surprising to see that the mean power control scheme for the distributed algorithm obtaines the best results when $d_{\max}$ is around 10.  While the distributed algorithm using uniform and mean power and HW algorithm with binary search give very similar solutions when $d_{\max}$ is large, it seems that the HW algorithm with binary search manages to deal very well with instances where the links are likely to be long.   When we increased $d_{\max}$ above $60$, so that the links are likely to be very long and overlap each other, the HW algorithm with binary search actually starts to perform slightly better than the distributed algorithm.

\section{Conclusions}
In this paper we have improved the bounds for the game theoretic approach towards capacity maximization, and in doing so, achieved the first distributed constant factor approximation algorithm for capacity maximization for uniform power assignment.  The algorithm is a simple low-regret algorithm introduced by Andrews and Dinitz \cite{DBLP:conf/infocom/AndrewsD09} and Dinitz \cite{Dinitz2010}.  We showed that when compared to the optimum where links may use an arbitrary power assignment, the distributed algorithm achieves an $O(\log \Delta)$ approximation, where $\Delta$ is the ratio between the largest and the smallest links in the network.  The approximation factor is an exponential improvement of the existing results for distributed algorithms, and in addition, we showed that our results work for links located in any metric space.

The simulation results show that the distributed algorithm where each link runs a simple no-regret algorithm does very well in practice and in some instances, almost as good as optimal.  We show that the distributed algorithm works well in practice using uniform power for all the problem instances we tried and using the distributed algorithm with mean power also gives good results for sparse instances.  However, the simulations show that using linear power scheme does not work well with the distributed algorithm, which is consistent with a theoretical lower bound we presented.  We also give a modification to the single shot scheduling algorithm by Halld\'{o}rsson and Wattenhofer \cite{HW09}, with vastly improved practical results for random instances and show that the modified algorithm of Halld\'{o}rsson and Wattenhofer gives similar and even slightly better results than the distributed algorithm when the problem instances are very difficult.% with dense sets of long and overlapping links.

\bibliographystyle{IEEEtran}
\bibliography{$HOME/Documents/ReadingNotes/references}

\end{document}